\pgfplotsset{compat = newest}
\theoremstyle{definition}
\DeclareMathOperator{\probability}{\mathbb P}%
\DeclareMathOperator{\expectedvalue}{\mathbb E}%
\DeclareMathOperator{\variance}{Var}%
\newcommand{\ev}[1][]{%
\ifthenelse{\equal{#1}{}}{{\expectedvalue #1}}{{\expectedvalue\left(#1\right)}}%
}
\newcommand{\evcond}[2][]{%
\ifthenelse{\equal{#1}{}}{{\expectedvalue_{#1} #2}}{{\expectedvalue_{#1}\left(#2\right)}}%
}
\newcommand{\var}[1][]{%
\ifthenelse{\equal{#1}{}}{{\variance #1}}{{\variance\left[#1\right]}}%
}
\newcommand{\pr}[1][]{%
\ifthenelse{\equal{#1}{}}{{\probability #1}}{{\probability\left(#1\right)}}%
}
\newcommand{\prcond}[2][]{%
\ifthenelse{\equal{#1}{}}{{\probability_{#1} #2}}{{\probability_{#1}\left(#2\right)}}%
}
\DeclarePairedDelimiterX{\infdivx}[2]{(}{)}{%
  #1\;\delimsize\|\;#2%
}
\newcommand{\infdiv}{D\infdivx}
\newcommand{\raut}{r_{\mathrm{aut}}}
\newcommand{\rbdd}{r_{\mathrm{bdd}}}
\newcommand{\rbddt}{\tilde r_{\mathrm{bdd}}}
\newcommand{\rmajt}{\hat r_{\mathrm{bdd}}}
\newcolumntype{Y}{>{\centering\arraybackslash}X}
\title{The Social Learning Barrier}
\author{Florian Brandl\thanks{\texttt{florian.brandl@uni-bonn.de}}\\ University of Bonn}
\date{}
\begin{document}
\maketitle

\begin{abstract}
	We consider long-lived agents who interact repeatedly in a social network. In each period, each agent learns about an unknown state by observing a private signal and her neighbors' actions from the previous period before choosing her own action. Our main result shows that the learning rate of the slowest-learning agent is bounded from above by a constant that only depends on the marginal distributions of the agents' private signals and not on the number of agents, the network structure, correlations between the private signals, and the agents' strategies. Applying this result to equilibrium learning with rational agents shows that the learning rate of all agents in any equilibrium is bounded under general conditions. This extends recent findings on equilibrium learning and demonstrates that the limitation stems from an information-theoretic tradeoff between optimal action choices and information revelation, rather than strategic considerations. We also show that a social planner can achieve almost optimal learning by designing strategies for which each agent's learning rate is close to the upper bound.  
\end{abstract}

\section{Introduction}\label{sec:introduction}

How fast do individuals learn from repeatedly observing each other's actions in social networks? 
The amount of private information in large networks is vast, so efficient information aggregation would lead to rapid learning.
However, we show that information aggregation fails under general conditions: the rate of learning of the slowest-learning agent is bounded from above by a constant that does not depend on the size and structure of the network and the agents' behavior. 
This has various direct consequences for equilibrium learning of rational agents and non-Bayesian agents.
The general insights apply to domains such as product choice, voting, technology adoption, and opinion formation.
  
In our model, long-lived agents interact with one another over an infinite number of periods in a network. 
The state of the world is fixed but unknown.
In each period, each agent receives a private signal about the state and observes the actions of her neighbors from the previous period before choosing her own action.
The signals are independent and identically distributed across periods conditional on the state.
An agent's flow utility in a period depends on her action and the state, but not on the actions of other agents, and is unobserved.
We quantify learning by the rate at which the probability of suboptimal action choices vanishes.  
Each agent can learn the state in the limit through the constant stream of her private signals alone. 
However, our main result (\Cref{thm:main}) shows that information aggregation fails under general conditions: for any number of agents, any network structure, and any strategies for the agents, some agent learns no faster than a fixed upper bound.
This bound only depends on the marginal distributions of the agents' private signals and not on correlations between them.
Thus, even a social planner who can dictate the agents' strategies cannot achieve a uniformly high learning rate.

In networks with many agents, the total amount of private information is vast. 
So fully efficient information sharing would enable all agents to surpass any desired learning rate as the network grows in size.
However, we show that some agents inevitably learn not much faster than a single isolated agent, independently of network size.
To see what causes the breakdown of information aggregation, take the perspective of a social planner who can design the agents' strategies.\footnote{In online social networks, the platform can significantly influence the users' behavior, similar to our imaginary social planner, by selecting the information shown to a user and making personalized recommendations.}
The planner aims to maximize the rate at which the probability of suboptimal action choices vanishes, which corresponds to maximizing the learning rate of the slowest-learning agent in our model.
The strategies must trade off two competing objectives: on the one hand, for each agent, the probability that her action is suboptimal must vanish rapidly; on the other hand, an agent's action choice must contain information about her private signals.
To achieve a learning rate significantly higher than that of an isolated agent, the first objective requires that an agent's actions do not depend too often on her private signals.
But this conflicts with the second objective, which requires the opposite.
In fact, strategies that depend on the agents' private signals too infrequently result in information cascades similar to the ``rational herds'' of \citet*{BHW92a} or the ``rational groupthink'' event of \citet*{HMST21a}.

The obtained bound on the learning rate is tight: we show that a social planner can design strategies for which each agent's learning rate gets arbitrarily close to the upper bound if there are sufficiently many agents and the network is strongly connected (\Cref{thm:coordination}).\footnote{A network is strongly connected if there is an observational path from any agent to any other agent.}
Here, we assume that the signals are conditionally independent and identically distributed across periods and agents.
This shows that the above tradeoff is not absolute: agents can match the correct action more frequently than an isolated agent and have their actions be informative about their private signals at the same time.
For complete networks, the strategies we use are simple.
Each agent follows her past private signals if those are highly indicative of some state, which ensures that these actions are very likely correct.
Whenever an agent's private signals do not decisively favor one of the states, she copies the action that was most popular among all agents in the previous period.
Classic results from large deviations theory show that most agents' signals strongly indicate the true state in most periods, so that the most popular action in any period is very likely correct.
Hence, an agent's action is very likely correct in either case.

We then turn to learning in equilibrium by rational agents who geometrically discount future payoffs.
An imitation argument of \citet{HST24a} shows that all agents learn at the same rate in any equilibrium and any strongly connected network.\footnote{See \citet[][Lemma 2]{HST24a} for the imitation argument with geometrically discounting agents. Similar imitation principles are common in the literature \citep[see, e.g.,][]{SmSo00a,GaKa03a,GoSa17a}.\label{fn:imitation}}
Hence, any bound on the learning rate of the slowest-learning agent gives the same bound for every agent.
This allows us to recover the results of \citet{HMST21a} and \citet{HST24a}---showing that all agents learn at a bounded rate in any equilibrium for any number of agents in any strongly connected network---under more general conditions.
Most notably, we do not assume that the set of actions is finite, that the signals are conditionally independent across agents, or that each signal has bounded informativeness.
The main new conceptual insight is that the bound on the learning rate arises not from equilibrium strategy restrictions, but from an information-theoretic tradeoff.
Moreover, since our bound on the learning rate of the slowest-learning agent applies to any strategies, it holds for equilibrium learning independently of agents' evaluation of future payoffs, for misspecified agents, and for agents who use non-Bayesian heuristics.
Whenever an imitation argument is available or the strategy profile is symmetric, the bound holds for all agents, not just the slowest-learning one.

Finally, we demonstrate that the failure of information aggregation persists in a richer information environment by considering a model variation where agents observe their neighbors' signals in addition to their actions.
In complete networks, all information is then public, and rational agents learn at the first-best rate.
However, a common feature of social networks is that each agent has a small number of neighbors compared to the size of the network.
As a corollary of our main result, we show that the size of the largest neighborhood controls the equilibrium learning rate.
That is, in any equilibrium for any number of agents in any strongly connected network, no agent's learning rate exceeds an upper bound that only depends on the size of the largest neighborhood and the marginal distributions of private signals.
For this result, we assume that the signals are conditionally independent across periods and agents, and agents are rational and geometrically discount future payoffs.

These results lead to four main insights.
First, observational learning in networks fails to efficiently aggregate information, even when a social planner designs the agents' strategies.
Second, a small amount of information aggregation is feasible: suitably designed strategies allow each agent to learn faster than a single, isolated agent.
Third, upper bounds on the equilibrium learning rate are not primarily driven by incentives.
And fourth, these bounds hold even when agents also observe their neighbors' signals in networks with small neighborhoods.

The rest of the paper is structured as follows. 
\Cref{sec:related-work} discusses related work and \Cref{sec:model} introduces the model.
In \Cref{sec:autarky-equilibrium}, we recall known results on learning by an isolated agent and with publicly observable signals.
\Cref{sec:coordinated-learning} states the main result, explains the ideas underlying its proof, and shows that the established bound on the learning rate is tight.
\Cref{sec:equilibrium-learning} applies this result to learning in equilibrium with geometrically discounting agents and considers the model variation where agents observe their neighbors' signals and actions. 
\Cref{sec:proof-outlines} outlines the proofs of the main results.
\Cref{sec:discussion} concludes with a discussion of modeling assumptions and future directions.
All proofs are in the \hyperref[sec:app-prelims]{Appendix}.

\section{Related Work}\label{sec:related-work}

Most of the literature has focused on equilibrium learning, non-Bayesian agents, non-recurring private signals, or sequentially arriving agents.

\subsection{Equilibrium Learning in Networks}

Studying models with multiple periods and long-lived rational agents is challenging.
Agents may choose suboptimal actions today to induce other agents to reveal information tomorrow.
In recent work, \citet{HST24a} show that in the same model as in the present paper, the rate of learning of any agent in any equilibrium is bounded independently of the number of agents in strongly connected networks. 
This result follows from the following elegant argument: if an agent's neighbor learns much faster than in autarky, she will ignore her private signals and imitate her neighbor, so that her actions cease to reveal information about her private signals; in a strongly connected network, this reasoning extends inductively to all agents; but then all agents eventually ignore their private signals, learning stops, and a contradiction is reached.

\citet{HMST21a} obtain a similar conclusion in a restricted setting with myopic agents and networks in which each agent observes all other agents' actions.
They use large deviations theory to derive their result, and their work is methodologically closer to ours.
Both papers rely on agents being fully rational, playing equilibrium strategies, and either geometrically discounting future payoffs or being myopic---assumptions that our results show are not necessary.

\subsection{Non-Bayesian Learning}

Because of the difficulties arising from Bayesian learning with repeated interactions, the literature has focused on learning heuristics and non-Bayesian agents.
The literature following \citet{Degr74a} assumes that agents observe each other's beliefs and form tomorrow's belief via an updating heuristic such as linear aggregation \citep[see, e.g.,][]{GoJa10a}.
Another approach is to relax the assumptions that agents are fully Bayesian. 
For example, \citet{BaGo98a} assume agents respond rationally to private signals and others' random payoffs but ignore the informational content of others' actions. Meanwhile, the agents in \citet*{MTJ18a} use a heuristic: they combine past beliefs and then update that aggregate belief rationally based on their private signals.

\subsection{Non-Recurring Private Signals} 

Another strand of the literature, starting with \citet{GePo82a}, \citet{Bach85a}, and \citet{PaKr90a}, considers models in which rational and long-lived agents receive a private signal once in the first period and repeatedly observe the actions of other agents, and studies whether agents converge on the same action. 
\citet{GaKa03a} allow for social networks in which agents observe their neighbors' actions and show that eventually, all agents converge on the same action.
In the same model, \citet*{MST14a,MST15a} study the probability that agents converge on the correct action as the number of agents goes to infinity and show that it depends on the network structure.
\citet{Vive93a} considers a continuum of agents with a continuous action space and shows that information aggregation can still be slow if observations of actions are noisy, as in the case of observed market prices.
In contrast to our model, agents do not receive private signals in later periods. 

\subsection{Sequentially Arriving Agents}

In the classical herding model \citep*{BHW92a,Bane92a,SmSo00a}, agents arrive sequentially, observe a private signal as well as their predecessors' actions, and take an action once.  
Learning can fail in this setting since rational agents may ignore their private signals and follow their predecessors' actions, leading to herding on the wrong action. 
The analysis of this model is substantially different from the present model. 
First, since each agent acts only once, informational feedback loops need not be considered.
Second, each agent receives only one private signal, so that rational agents may fail to learn the state in the limit.

\citet{RoVi17a}, in an unpublished draft of a later paper \citep{RoVi19a}, consider a social planner who can dictate the agents' strategies.
They demonstrate that the planner can achieve efficient learning (in the sense that the expected number of agents failing to match the state is finite), even if the distribution of private signals is not informative enough for efficient learning in an equilibrium with rational agents.
Hence, the inefficiency of equilibrium learning arises from strategic incentives rather than a more basic, information-theoretic tradeoff, which contrasts with our results. 
Relatedly, \citet*{ABM+25a} consider condescending agents who underestimate the quality of the other agents' private signals.
This misspecification decreases the probability of herding on the wrong action and improves learning compared to the well-specified case if condescension is mild.
Thus, they recover the result of \citet{RoVi17a} since a social planner could instruct agents to behave as if they were condescending.
\citet{HMST21a} conjecture that the same misspecification improves learning in the present model as well.
Our results show that its benefit can be small at best.
\citet{SST21a} study a social planner who aims to maximize the discounted social welfare.
These works are similar to our approach in spirit.

In a variant of the herding model, the state changes stochastically over time \citep*{MOS98a,LPV24a,Huan24a}.
We maintain the assumption that the state is persistent throughout.
A recent survey of \citet*{BHTW21a} summarizes the work on models with sequentially arriving agents.

\section{The Model}\label{sec:model}

Let $N = \{1,\dots,n\}$ be the set of agents and let $T = \{1,2,\dots\}$ be the set of periods.
Each agent has the same possibly infinite set of actions $A$ and chooses an action in each period.
If $x$ is a vector indexed by $N$ and $i\in N$, then $x^i$ denotes its $i$-th coordinate, and if $x$ is indexed by $T$ and $t\in T$, then $x_t$ is the $t$-th coordinate, $x_{\le t}$ is its restriction to the periods $\{1,\dots,t\}$, and $x_{<t}$ is its restriction to the periods $\{1,\dots,t-1\}$.
There is an unknown state $\omega$, taking values in a finite set $\Omega$.
Generic elements of $\Omega$ are denoted by $f,g$.
We denote by $\pi_0\in\Delta(\Omega)$ the distribution of $\omega$, and assume that it has full support.
For a state $f\in \Omega$, we write $\prcond[f]{\cdot} = \pr[\cdot \mid f]$ and $\evcond[f]{\cdot} = \ev[\cdot\mid f]$ for the corresponding conditional probability and conditional expectation.

\subsection{Agents' Payoffs}

All agents have the same utility function $u\colon A\times\Omega\rightarrow\mathbb R$ that depends on their own action and the state, and $u(a,\omega)$ is an agent's flow utility for choosing the action $a$ in any period.
An agent's utility is independent of other agents' actions, so the interactions between the agents are purely informational.
We assume that the optimal action $a_f$ for every state $f\in\Omega$ exists and is unique.
\begin{align*}
	\{a_f\} = \arg\max_{a\in A} u(a,f)
\end{align*}
We also assume that $a_f\neq a_{g}$ for any two distinct states $f,g\in\Omega$.\footnote{The assumption that all agents have the same utility function is purely for notational convenience. All proofs remain valid with the obvious adjustments provided each agent's utility function satisfies the preceding genericity assumptions and the agents know each other's utility functions (or at least each other's optimal actions in each state).}
We say that $a_\omega$ is the correct action and any other action is a mistake.
The requirement that no action is optimal in two different states avoids trivial cases, and the uniqueness of the optimal action for each state precludes agents from encoding information in the selection of the optimal action.
This tradeoff between choosing the correct action and communicating information about private signals is the main tension in our model.
We explain this in detail in \Cref{rem:non-unique-optimal-actions}.
On the other hand, a single action choice can contain an arbitrary amount of information since $A$ can be infinite, which makes our result establishing an upper bound on the learning rate stronger.
Since we quantify learning by the probability of choosing the correct action, the sole role of utility functions is to identify the correct action.
More fine-grained characteristics of the utility functions only become relevant when analyzing equilibria with geometrically discounting agents (cf.\ \Cref{sec:equilibrium-learning}).

\subsection{Agents' Information}

The prior distribution $\pi_0$ is commonly known.
In each period $t\in T$, each agent $i$ privately observes a signal $\mathfrak s_t^i$ from a set of signals $S$, which is assumed to be a standard Borel space.
Conditional on each state $f$, $\mathfrak s_t^i$ has distribution $\mu_f^i\in\Delta(S)$ and $\mathfrak s_t = (\mathfrak s_t^i)_{i\in N}$ has distribution $\mu_f\in\Delta(S^N)$ for each $t\in T$, and $(\mathfrak s_t)_{t\in T}$ are independent conditional on $\omega$.
That is, signals are conditionally independent across periods but not necessarily across agents.
We assume that $\mu_f,\mu_{g}$ are mutually absolutely continuous for any two states $f,g$, so that no signal excludes any state with certainty.
Observing the signal realization $s\in S$ changes the log-likelihood ratio of the observing agent $i$ between the states $f$ and $g$ by 
\begin{align*}
	\ell^i_{f,g}(s) = \log \frac{d\mu^i_f}{d\mu^i_{g}}(s)
\end{align*}
Let $\ell^i_{f,g} = \ell^i_{f,g}(\mathfrak s_1^i)$ for any pair of states $f,g$ and any agent $i$.
We assume that $\ell^i_{f,g}$ is not identically zero and that its cumulant generating function $\lambda_{f,g}^i(z) = \log \ev[e^{z \ell^i_{f,g}}]$ is finite for each $z\in\mathbb R$ and any two distinct states $f,g$. 
This assumption allows us to liberally use results from large deviations theory, and is satisfied, for example, whenever $\ell_{f,g}^i$ is bounded or if $S = \mathbb R$ and $\mu_f^i$ is a normal distribution with the same variance for each state $f$.

The private signals of agent $i$ up to any period~$t$ induce the private log-likelihood ratio 
\begin{align*}
	L_{f,g,t}^{i} = \log \frac{\pi_0(f)}{\pi_0(g)} + \sum_{r \le t} \ell^i_{f,g}(\mathfrak s_r^i)
\end{align*}

Each agent $i$ observes the actions of her neighbors $N^i\subset N$, and we assume $i\in N^i$ so that each agent observes her own action.
The directed graph induced by these neighborhoods is called the network, and we assume that it is common knowledge among the agents.
A network is strongly connected if there is an observational path from any agent to any other agent, and complete if each agent's neighborhood is $N$.

Agents do not observe each other's signals.
Moreover, agents know their utility function (and thus everyone's utility function) but do not observe the flow utility $u(a,\omega)$ of any agent, including themselves. 
The latter assumption shuts down experimentation motives and is common for models of learning without experimentation.
Our formulation includes a model in which agents receive noisy signals about their flow utility today through tomorrow's signal.\footnote{Formally, consider the case that agent $i$'s flow utility for action $a$ in period~$t$ is $\tilde u(a,\mathfrak s_{t+1}^i)$ for an action and signal-dependent utility function $\tilde u\colon A\times S\rightarrow\mathbb R$.
Agents thus observe their flow utility in period~$t$ through their signal in the next period.
If we define $u(a,f) = \evcond[f]{\tilde u(a,\mathfrak s_1^i)}$, then both models are equivalent in terms of expected payoffs at the time of choosing an action.
This connection has also been noted by \citet*{RSV09a}, \citet{HMST21a}, and \citet{HST24a}.
\label{fn:experimentation}}
In \Cref{rem:experimentation}, we explain that the assumption of unobserved utilities can be weakened substantially.

Thus, the information available to agent $i$ in period~$t$ before choosing an action consists of the actions of all of $i$'s neighbors in all previous periods and $i$'s signals in all periods up to and including $t$.
We say that $A^{|N^i|\times (t-1)}$ is the set of public histories of agent $i$, $S^t$ is the set of private histories for each agent, and $\mathcal I_{\le t}^i = S^t \times A^{N^i\times (t-1)}$ is the collection of information sets of agent $i$ before choosing an action in period~$t$.

\subsection{Agents' Strategies}\label{sec:strategies}

A pure strategy for agent $i$ is a sequence $\sigma^i = (\sigma_t^i)_{t\in T}$ of measurable functions $\sigma_t^i\colon \mathcal I_{\le t}^i\rightarrow A$ from $i$'s information sets in period $t$ to the set of actions, and a pure strategy profile $\sigma = (\sigma^1,\dots,\sigma^n)$ consists of a strategy for each agent.
A pure strategy profile $\sigma$ induces a random sequence of action profiles: for each $i$, $a_1^i = \sigma_1^i(\mathfrak s_1^i)$, and for each $t > 1$, $a_t^i = \sigma_t^i(\mathfrak s_{\le t}^i;(a_{<t}^j)_{j\in N^i})$, where $i$'s public history $(a_{<t}^j)_{j\in N^i}$ is given by the actions of her neighbors in the periods preceding $t$.  
For $i\in N$, $t\in T$, and $H_{\le t}\in A^{N\times t}$, we denote by $\mathcal S^i(H_{\le t}) = \{s^i_{\le t} \in S^{t}\colon  \forall r \le t, \sigma_r^i(s^i_{\le r}; (H_{<r}^j)_{j\in N^i}) = H_r^i\}$ the set of those trajectories of $i$'s signals  consistent with $H_{\le t}$, and write $\mathcal S(H_{\le t}) = \prod_{i\in N} \mathcal S^i(H_{\le t})$ for the trajectories of signal profiles consistent with $H_{\le t}$.
Thus, play follows $H_{\le t}$ if and only if each agent $i$ receives signals in $\mathcal S^i(H_{\le t})$.

We say that agent $i$ makes a mistake in period~$t$ if $a_t^i\neq a_\omega$, and that $i$ learns at rate $r$ if
\begin{align*}
	r = \liminf_{t\rightarrow\infty} -\frac1t\log \pr[a_t^i\neq a_\omega]
\end{align*}
If the limit exists, the probability of a mistake in period~$t$ is $e^{-rt + o(t)}$.\footnote{We use the asymptotic notation $o(t)$ for a function that grows slower than $t$ as $t$ goes to infinity. That is, $\phi(t) \in o(t)$ if $\lim_{t\rightarrow \infty} \frac{\phi(t)}{t} = 0$.}
This definition of the learning rate is common in the literature \citep[see, e.g.,][]{Vive93a,HMT18a,MTJ18a,RoVi19a,HMST21a,HST24a}.\footnote{An alternative, more quantitative asymptotic definition of the learning rate considers the rate at which the expected difference between the utility of the correct action and an agent's action goes to $0$, i.e.,
	\begin{align*}
		\liminf_{t\rightarrow\infty} -\frac1t \log \ev[u(a_\omega,\omega) - u(a_t^i,\omega)]
	\end{align*}
	For finite action sets, both definitions coincide. 
	Indeed, $\liminf_{t\rightarrow\infty} -\frac1t \log \ev[u(a_\omega,\omega) - u(a_t^i,\omega)] = \liminf_{t\rightarrow\infty} -\frac1t \log c_t\pr[a_t^i\neq a_\omega] = \liminf_{t\rightarrow\infty} -\frac1t \log \pr[a_t^i\neq a_\omega]$ for some $c_t$ between the minimum and the maximum of the set $\{u(a_f,f) - u(a,f)\colon f\in \Omega, a \in A\setminus\{a_f\}\}$ for each $t\in T$.
	For infinite action sets and a bounded utility function, the learning rate is weakly higher for the quantitative definition and may be arbitrarily high even when the qualitative learning rate is $0$. 
	Hence, the qualitative definition of the learning rate is most reasonable for finitely many actions.}
For any fixed number of agents, the lowest rate of any agent determines how fast the expected number of mistakes per period vanishes, since the exponential with the smallest negative exponent dominates in the long run.
Thus, the rate of the slowest-learning agent measures the rate of learning in the network.
Moreover, any bound on the rate of learning for pure strategies entails the same bound for mixed strategies for the same instance with a larger signal space and an additional signal component that is uninformative about the state.%
\footnote{More precisely, replace the signal space $S$ by $\tilde S = S \times [0,1]$, and for each state $f$, let $\tilde\mu_f$ be the product distribution on $\tilde S^N$ with marginal $\mu_f$ with respect to $S^N$ and the uniform distribution on $[0,1]^N$ as its marginal with respect to $[0,1]^N$.
We may choose the signals' second coordinates so that they are independent of the state and any other signals.
Then, the informativeness of each agent's signals remains unchanged, and the signal profile distributions remain mutually absolutely continuous and conditionally independent across periods.
The second coordinate of a signal in $\tilde S$ can be used to map any mixed strategy $\sigma$ for the instance with signals in $S$ to a pure strategy $\tilde\sigma$ for signals in $\tilde S$ that is behaviorally equivalent, i.e., conditional on each state, the induced distributions of sequences of action profiles (i.e., distributions on $A^{N\times T}$) are the same for $\sigma$ and $\tilde\sigma$.
Likewise, for each pure strategy profile with signal space $\tilde S$, there is a behaviorally equivalent mixed strategy profile with signal space $S$. \label{fn:pure-strategies}}
Hence, restricting to pure strategies entails no loss in generality, and we do so throughout.

All notation is summarized in \Cref{sec:referencetable}.

\subsection{Leading Example}\label{sec:leading-example}

The following simple instance of the model already presents most of the arising complexities and can serve as a leading example.
There are two states and two actions, say, $\Omega = \{f,g\}$ and $A = \{a_f,a_g\}$.
Each agent has utility $1$ for matching the state and $0$ for failing to match the state.
The network is complete.
The signals are binary ($S = \{s_f,s_g\}$) and conditionally independent and identically distributed across periods and agents, and each agent's signal in each period matches the state with probability $p$ for some $p\in(\frac12,1)$ (i.e., $\mu^i_f(s_f) = p$ and $\mu^i_g(s_g) = p$).
We illustrate our results using this example at the end of \Cref{sec:coordinated-learning}.

\section{Autarky and Public Signals}\label{sec:autarky-equilibrium}

We revisit two settings as benchmarks: a single agent learning in autarky and several agents observing each other's signals.

If there is only a single agent $i$ who chooses actions optimally based on her private signals, it follows from classical large deviations results for random walks that for some $\raut^i > 0$ determined by the signal distributions $(\mu_f^i)_{f\in \Omega}$,
\begin{align*}
	\pr[a_t^i\neq a_\omega] = e^{-\raut^i t + o(t)}
\end{align*} 
Hence, the limit $\lim_{t\rightarrow\infty} - \frac1t \log \pr[a_t^i\neq a_\omega]$ exists and is equal to $\raut^i$, which we call $i$'s autarky learning rate.
In particular, the probability of a mistake goes to zero as time goes to infinity.
For a proof of this result, see \citet[][Theorem~2.2.30]{DeZe10a} or \citet[][Fact~1]{HMST21a} in the present context for the case of two states.
We provide more details in \Cref{sec:app-single-agent}.

Now consider any number of agents with public signals: each agent observes all other agents' signals.
If the agents' signals are conditionally independent and identically distributed across periods and agents, then the signals of $n$ agents in a single period are as informative as those of a single agent over $n$ periods.  
Hence, $n$ agents observing each other's signals and choosing actions optimally learn $n$ times as fast as a single agent in autarky, and the rate of learning is $n\raut$, where $\raut = \raut^i$ does not depend on $i$.
\begin{align*}
	\pr[a_t^i \neq a_\omega] = e^{-n \raut t + o(t)}
\end{align*}
In particular, the rate of learning grows linearly in the number of agents and can thus become arbitrarily large provided there are sufficiently many agents.
By contrast, when the signals are not conditionally independent across agents, even two agents can learn much faster than a single agent with the same distribution of private signals. 
For our leading example (cf.\ \Cref{sec:leading-example}), if the probability that both agents simultaneously receive the wrong signal is sufficiently low, then an observer of the joint signals can achieve an arbitrarily high learning rate.

\section{Coordinated Learning}\label{sec:coordinated-learning}

We study how fast agents can learn from observing their private signals and their neighbors' actions.
First, we take the perspective of a social planner who can design the agents' strategies and aims to maximize the learning rate of the slowest-learning agent.
In particular, we ask whether one can design strategies such that, in a sufficiently large and strongly-connected network, each agent's learning rate exceeds a given fixed bound.

Our first result shows that there is an upper bound on the learning rate of the slowest-learning agent that only depends on the marginal distributions of the agents' private signals.
In particular, it is independent of the number of agents, the network structure, and the strategy profile imposed by the social planner.
Hence, information aggregation breaks down under very general conditions, even in the favorable case when strategies are not constrained by incentives.
The driving force is a fundamental tradeoff between choosing the correct action and using actions to communicate information.
More precisely, each agent's strategy needs to trade off between choosing the action that is most likely to be correct in the current period and using actions to inform other agents about her private signals and her neighbors' actions to reduce others' probability of mistakes in future periods. 
The fact that learning is bounded shows that there is no way to achieve both objectives to a high degree simultaneously.

\begin{restatable}[Learning is bounded]{theorem}{main}\label{thm:main}
	For any number of agents $n$, any network, and any strategies $\sigma^1,\dots,\sigma^n$, some agent learns at a rate of at most $\rbdd = \min_{f\neq g} \max_{i\in N} \evcond[f]{\ell_{f,g}^i}$.
\end{restatable}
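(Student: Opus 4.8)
The plan is to prove the bound directly, without any reference to incentives, by locating a single agent whose probability of a mistake under some state cannot decay faster than $e^{-\rbdd t}$. First I would fix a pair of distinct states $f \neq g$ attaining the outer minimum, so that $\rbdd = \max_{i\in N}\evcond[f]{\ell_{f,g}^i}$ and, equivalently, $\evcond[f]{\ell_{f,g}^i} = \infdiv{\mu_f^i}{\mu_g^i} \le \rbdd$ for \emph{every} agent $i$. Since the prior has full support, the learning rate equals the slowest conditional rate across states, so it suffices to show that for some agent $i$ one has $\prcond[g]{a_t^i = a_f} \ge e^{-\rbdd t - o(t)}$ along a subsequence of periods; because $a_f \neq a_g$, this event is a mistake in state $g$ and forces $i$'s rate down to at most $\rbdd$.

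The quantity $\evcond[f]{\ell_{f,g}^i}$ should enter through a single large-deviations estimate. Under the true state $g$, the event that agent $i$'s private signals over the first $t$ periods \emph{look as if they came from state $f$} — i.e.\ the empirical law of $\mathfrak s_1^i,\dots,\mathfrak s_t^i$ is close to $\mu_f^i$, equivalently the private log-likelihood ratio $L_{f,g,t}^i$ has climbed to roughly $+t\,\evcond[f]{\ell_{f,g}^i}$ — has probability exactly $e^{-t\,\evcond[f]{\ell_{f,g}^i} + o(t)}$ by Cramér's theorem, whose hypotheses are guaranteed by the standing assumption that the cumulant generating functions $\lambda_{f,g}^i$ are finite everywhere. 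On this event agent $i$'s own evidence overwhelmingly favors $f$, so an agent acting on her signal alone would play $a_f$. The technical vehicle for passing between the two states along such events is the change of measure $\prcond[g]{\cdot} = \evcond[f]{\mathbb 1_{(\cdot)}\,e^{-\Lambda_t}}$ with $\Lambda_t = \sum_{r\le t}\log\frac{d\mu_f}{d\mu_g}(\mathfrak s_r)$; the point of the theorem is that the relevant cost is a single agent's \emph{marginal} relative entropy $\evcond[f]{\ell_{f,g}^i}$ and not the full joint relative entropy $\infdiv{\mu_f}{\mu_g}$, which may be vastly larger and would yield only a hopelessly weak bound.

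The step I expect to be the main obstacle is converting agent $i$'s private large deviation into an \emph{actual} choice of $a_f$ in the presence of her neighbors' actions. Agent $i$ observes the actions of $N^i$, and because the action set $A$ may be infinite, a single neighbor's action can in principle carry an unbounded amount of information pointing toward $g$; moreover, biasing $i$'s signals can propagate through the network and is coupled with the behavior of agents who observe $i$. The heart of the argument is therefore to show that, for a suitably chosen agent, the private evidence accruing at rate $\evcond[f]{\ell_{f,g}^i}$ can be made to override the public evidence on an event of probability $e^{-t\,\evcond[f]{\ell_{f,g}^i}+o(t)}$, and that in the worst case this forces us onto the most informative agent — which is precisely what turns the per-agent quantity into the maximum $\rbdd = \min_{f\neq g}\max_{i\in N}\evcond[f]{\ell_{f,g}^i}$. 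Conceptually this is the tradeoff highlighted in the introduction made quantitative: if every agent learned strictly faster than $\rbdd$, then every action would be correct so overwhelmingly often that it could transmit essentially no information about private signals, leaving each agent with only her own signal, whose rate is at most $\rbdd$ — a contradiction. Carrying this out rigorously requires a self-consistent (and, in the general network, inductive) control of how information propagates through actions, and I anticipate that this propagation analysis, rather than the large-deviations estimate, is where the real work lies.
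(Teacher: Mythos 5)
There is a genuine structural gap. Your proposed reduction---``it suffices to show that for some agent $i$ one has $\prcond[g]{a_t^i = a_f} \ge e^{-\rbdd t - o(t)}$ along a subsequence''---is a sufficient condition that is simply not achievable for arbitrary strategies: the profile in which every agent plays $a_g$ in every period satisfies $\prcond[g]{a_t^i = a_f} = 0$ for all $i,t$, yet the theorem still holds because everyone errs forever in state $f$. The mistake you must exhibit lives in the \emph{other} state, and the argument has to be run by contradiction: assume every agent learns faster than $\rbdd$, conclude that under $g$ the all-$a_g$ history is overwhelmingly likely, and then show that this herding event retains \emph{positive} probability under $f$, forcing every agent's rate to $0$. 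Relatedly, the step you flag as the heart of the matter---private evidence ``overriding'' public evidence so that agent $i$ actually plays $a_f$---is a dead end, since nothing obliges a strategy to respond to private signals however decisive they are. The paper's mechanism is different: it bounds, under $g$ and conditional on a suitable positive-probability history $H_{\le t_0}$, the probability that agent $i$ is the \emph{first} to deviate from the all-$a_g$ continuation; given that history this defection event is determined by $i$'s own signals, so the change of measure from $g$ to $f$ costs only $e^{(\evcond[f]{\ell_{f,g}^i}+\epsilon)t}$ on the set where $i$'s private log-likelihood ratio tracks its $f$-drift. Summing over agents and periods leaves the cascade with probability at least $\tfrac12$ under $f$. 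Your instinct that the marginal relative entropy $\evcond[f]{\ell_{f,g}^i}$, and not the joint one, is the relevant cost is exactly right, but it enters as the price of tilting a single agent's \emph{defection} event, not as the probability that her signals ``look like $f$'' and she acts on them.

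Two further points. First, your closing heuristic---actions become uninformative, so each agent is left with her own signals, ``whose rate is at most $\rbdd$''---conflates the autarky rate $\raut^i=(\lambda^i_{f,g})^*(0)$ with $\evcond[f]{\ell_{f,g}^i}$; the former is strictly smaller, and \Cref{thm:coordination} shows rates close to $\rbdd$ are attainable, so that heuristic, made precise, would prove a false bound. Second, the proposal does not engage with the two technical reductions the proof actually needs: converting the $\liminf$ definition of the rate into per-period bounds after conditioning on a good finite history (\Cref{lem:Ht0}), and handling correlation of signals across agents, which breaks the naive factorization of the ``no one defects before $t$'' event and is resolved in the paper by partitioning each agent's signal trajectories into the sets $E_t^i$ and $F^i$. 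Without these, the plan does not yet amount to a proof.
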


In other words, for any strategy profile, there is some agent $i$ such that for each $\epsilon > 0$, we have $\pr[a_t^i\neq a_\omega] \geq e^{-(\rbdd + \epsilon)t}$ for infinitely many periods $t$.
So fast learning by some agents always comes at the cost of other agents' learning.
It is clear from the expression for $\rbdd$ that it only depends on the marginal signal distributions $\mu_f^i$ and is determined by the two states that are hardest to distinguish.
\Cref{sec:proof-outlines} provides an outline of the proof of \Cref{thm:main}. 

To provide some intuition, we discuss two policies that a social planner may try to use to achieve a uniformly high learning rate.
In the first, each agent randomizes each period: with probability $10\%$, the agent's action choice reveals all of her information to her neighbors; with probability $90\%$, she chooses myopically optimal given her information.\footnote{Recall that we allow the set of actions to be infinite, so that a single action choice can encode all information available to the agent.} 
	These strategies provide each agent with all information (including everyone's private signals) about all but the most recent periods with high probability.
	Hence, the mistake probability is close to the public signal case in $90\%$ of periods in the limit. 
	However, each agent is likely to make a mistake in the remaining $10\%$ of periods, so that the probability of mistakes does not converge to $0$ and the learning rate is $0$.
	The second planner policy tries to improve on the first through less frequent information revelation.
	Now the probability that an agent's action in period $t$ reveals all of her information is, say, $e^{-10t}$; otherwise the strategies are as above.
	Thus, revelation periods are exponentially unlikely and do not prohibit learning at a rate above $\rbdd$ (assuming $10 > \rbdd$).
	However, the total number of revelation periods is almost surely finite under these strategies by the Borel-Cantelli lemma.
	So information revelation always ceases eventually, and the agents eventually behave purely myopically. 
	In the first example, enough information is revealed to allow for fast learning, but the revelation induces too many mistakes; the opposite holds for the second example.
	The proof of \Cref{thm:main} shows that no planner policy can provide enough information revelation to facilitate fast learning while avoiding too many mistakes in the process.

We remark in brief that \Cref{thm:main} can be upgraded to a statement about the fastest-learning agent's rate in symmetric situations.
Indeed, for any symmetric distribution of signal profiles, any symmetric network, and any symmetric strategy profile, all agents learn at the same rate.\footnote{An automorphism of a network is a permutation of agents that preserves neighborhood relations. Then, a network is symmetric if for any two agents $i$ and $j$, there exists an automorphism mapping $i$ to $j$.}
The canonical example for this situation is the complete network with signals that are conditionally i.i.d.\ across periods and agents and the same strategy for all agents.
It then follows directly from \Cref{thm:main} that no agent can exceed the rate $\rbdd$.
This gives one answer to a conjecture of \citet{HMST21a}, who speculate that a social planner can improve learning by instructing agents to overweight their private signals.
The intuition is that the actions become more informative about private signals and so provide more information for future periods.\footnote{The conjecture is also supported by results of \citet{ABM+25a}, who show that, in the classical herding model where agents arrive sequentially, moderate condescension---a misspecification where agents overweight their private signal but are otherwise rational---can significantly improve the efficiency of learning.}
The preceding remarks show that, assuming symmetry, the benefit of such a planner policy cannot exceed $\rbdd$.
In \Cref{sec:equilibrium-learning} we discuss another case in which \Cref{thm:main} can be upgraded to a statement about the fastest-learning agent's rate.

\Cref{thm:main} is tight in two respects. 
First, without additional assumptions on the strategies, the statement cannot be lifted from one slow-learning agent to all agents learning slowly. 
Indeed, it is easy to construct strategies for which a large fraction of agents learn very fast if there are many agents.
Consider the leading example from \Cref{sec:leading-example} with the following strategies: in each period, each odd-numbered agent chooses the action that matches her private signal in that period, and each even-numbered agent chooses optimally based on the actions of the odd-numbered agents. Then, the odd-numbered agents do not learn at all, and each even-numbered agent learns at a rate of $\frac n2\raut$, which grows linearly with the number of agents.

Second, the upper bound on the learning rate $\rbdd$ of the slowest-learning agent cannot be improved. 
This is the content of our second result.
In its proof, we construct strategies for which all agents learn at a rate close to $\rbdd$ or faster, provided the agents' signals are conditionally independent and identically distributed, the network is strongly connected, and the number of agents is large.
Since $\rbdd > \raut$, this shows that coordination can improve the learning rate of all agents.

\begin{restatable}[Coordination improves learning]{theorem}{learninglowerbound}\label{thm:coordination}
	Assume the signals are conditionally independent and identically distributed across periods and agents.
	For any $\epsilon > 0$, there is $n_0$ such that for all $n \ge n_0$ and any strongly connected network, there exist strategies $\sigma^1,\dots,\sigma^n$ such that each agent learns at a rate of at least $\rbdd - \epsilon$.
\end{restatable}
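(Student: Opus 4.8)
The plan is to construct, for the complete network first and then for an arbitrary strongly connected one, a threshold strategy in which each agent follows her own accumulated private signals when they are decisive and otherwise defers to an aggregate of the other agents' past actions. Fix a target $\beta = \rbdd - \epsilon/2 \in (0,\rbdd)$ and call agent $i$ \emph{decisive for $f$} at period $t$ if $\frac1t\sum_{r\le t}\ell_{f,g}(\mathfrak s_r^i) \ge \beta$ for every $g\neq f$. Each agent plays $a_f$ if she is decisive for $f$, and otherwise copies the most popular action among all agents in the previous period. Under the conditionally i.i.d.\ assumption the marginal increments, and hence $\rbdd = \min_{f\neq g}\evcond[f]{\ell_{f,g}}$, are identical across agents, so a single large-deviations computation applies to everyone.

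The first step is the single-agent analysis via Cram\'er's theorem. Under the true state $f$ the increment $\ell_{f,g}$ has mean $\evcond[f]{\ell_{f,g}}\ge \rbdd > \beta$, so the probability of failing to be decisive for the correct state $f$ decays in $t$ at a strictly positive rate $\rho_1 > 0$. More importantly, being decisive for a wrong state $g$ forces $\frac1t\sum_{r\le t}\ell_{f,g}(\mathfrak s_r^i)\le -\beta$, a lower deviation of a walk with positive drift; here I would use the Stein identity $\log\evcond[f]{e^{-\ell_{f,g}}} = \log\int d\mu_g = 0$, so the Chernoff bound at the exponent $z=-1$ gives this event a rate $\rho_2 \ge \beta = \rbdd - \epsilon/2$. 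This is precisely what pins the achievable rate to $\rbdd$: the decisive-but-wrong contribution to a mistake is at most $e^{-(\rbdd-\epsilon/2)t}$.

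The second step controls the aggregate fallback. Let $q_t$ be the probability that the period-$t$ majority is wrong. Because decisiveness depends only on an agent's own signals, these events are conditionally independent across agents given $\omega$; the number decisive for the correct state is then a sum of independent indicators with mean at least $(1-o(1))\,n$, so by Hoeffding's inequality a strict majority is decisively correct except with probability $e^{-cn}$, once $t$ exceeds a fixed $t_0$. When that occurs the majority is correct regardless of the previous period, yielding the recursion $q_t \le (2e^{-\rho_2 t/2})^n + e^{-cn}q_{t-1}$, whose first term bounds the chance that $\ge n/2$ agents are simultaneously decisive-but-wrong and whose second bounds failure to recover from an earlier wrong majority. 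Unrolling shows $q_t$ decays in $t$ at a rate growing linearly in $n$, so for $n\ge n_0(\epsilon)$ this rate exceeds $\rbdd$. Combining the two steps, each agent's mistake probability is at most $e^{-(\rbdd-\epsilon/2)t}+q_{t-1}$, giving a learning rate of at least $\rbdd-\epsilon$.

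The hard part will be lifting this from the complete network to an arbitrary strongly connected one, where an agent sees only her neighbors and cannot directly read off the global majority. The plan is to relay the aggregate through the network: strong connectivity bounds the diameter by $n-1$, so error injected by the decisive agents reaches every agent within a constant number of periods, and I would have an indecisive agent defer to a \emph{delayed} global aggregate. A fixed delay shifts the exponent only additively and so contributes $o(t)$, leaving the rate intact. Organizing this relay from purely local observations---and, when the action set is small, securing enough bandwidth to both inject fresh private information and forward the aggregate---is the delicate point and the crux of the general case; the complete-network computation above is the conceptual core that fixes the constant $\rbdd$ of \Cref{thm:main}.
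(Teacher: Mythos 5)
Your construction and case split are essentially the paper's: the same threshold strategy (follow your own signals when they are decisive, otherwise copy the previous period's most popular action), the same decomposition of the error into a ``decisive-but-wrong'' term and a ``majority-fallback'' term, and the same relay idea for non-complete networks. Your computation of the decisive-but-wrong rate via the Chernoff bound at $z=-1$, using $\log\evcond[f]{e^{-\ell_{f,g}}}=0$ so that $\lambda_{f,g}^*(-\beta)\ge\beta$, is correct and is in fact a slightly more elementary route to the constant $\rbdd$ than the paper's, which goes through the identity $(\lambda_{f,g})^*(-\evcond[g]{\ell_{g,f}})=\evcond[g]{\ell_{g,f}}$ and continuity of the rate function.

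The gap is in the majority-fallback step. Writing $G_t$ for the event that a strict majority is decisive-correct at $t$ and $W_{t-1}$ for the event that the period-$(t-1)$ majority is wrong, your recursion $q_t\le(2e^{-\rho_2 t/2})^n+e^{-cn}q_{t-1}$ implicitly bounds $\pr[G_t^c\cap W_{t-1}]$ by $\pr[G_t^c]\,\pr[W_{t-1}]$. These events are not independent---both are driven by the same signal history and are positively correlated, since misleading early signals make both more likely---so the product bound is unjustified; without it the recursion only gives $q_t\le\sum_{s}\pr[B_s]+e^{-cn}$, which does not decay in $t$. The recursion was only needed because your Hoeffding estimate $e^{-cn}$ for $\pr[G_t^c]$ is taken at a fixed $t_0$ and is constant in $t$. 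The repair is to drop the recursion entirely: $W_t\subseteq G_t^c$ (a decisively correct strict majority makes the popular action correct regardless of history), and you already showed in your first step that the per-agent probability of failing to be decisive-correct is $e^{-\rho_1 t+o(t)}$ with $\rho_1>0$; by conditional independence across agents and a binomial tail bound, $q_t\le\pr[G_t^c]\le 2^n e^{-\rho_1 nt/2+o(t)}$, a rate linear in both $n$ and $t$, so $n_0>2\rbdd/\rho_1$ (plus slack) closes the complete-network case. This direct estimate is exactly the paper's Kullback--Leibler/binomial-tail computation. Finally, your extension to general strongly connected networks is left at the level of a plan; the paper makes it concrete by partitioning time into voting periods and $n(n-2)$ propagation periods per block with an explicit imitation schedule, which resolves the ``bandwidth'' worry you raise by time-multiplexing rather than by enlarging the message content of a single action.
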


In other words, for any positive $\epsilon$, there exist strategies so that $\pr(a_t^i\neq a_\omega) \le e^{-(\rbdd -\epsilon)t + o(t)}$ for each agent $i$ if the number of agents is large enough.
The assumptions that the signals are conditionally independent and the network is strongly connected are both needed.\footnote{First, if the agents' signals were perfectly correlated instead of conditionally independent, no agent could exceed the autarky learning rate since even observing other agents' signals would give no information beyond the agent's private signals. Second, in networks that are not strongly connected, some agents may have no neighbors, in which case they have no more information than a single agent in autarky.}
We sketch the proof of \Cref{thm:coordination} in \Cref{sec:proof-outlines}.

We illustrate the bounds from \Cref{thm:main} and \Cref{thm:coordination} for the example in \Cref{sec:leading-example}.
Recall that there are two states $f$ and $g$ and two signals, and each agent's signal in each period matches the state with probability $p\in(\frac12,1)$.
A calculation shows that 
\begin{align*}
	\evcond[f]{\ell^i_{f,g}} = \evcond[g]{\ell^i_{g,f}} = (2p-1) \log\frac{p}{1-p}
\end{align*}
for each agent $i$.
Hence, $\rbdd = (2p-1) \log\frac{p}{1-p}$.
The expression for $\raut$ involves a minimization problem and cannot be stated in closed form.
For $p = 0.75$, we have numerically that $\raut \approx 0.144$ and $\rbdd \approx 0.549$. 
Hence, for these signal distributions, observing conditionally independent signals of four agents allows for faster learning than the slowest-learning agent in any network for any number of agents with any strategies.
\Cref{fig:learning-rates} illustrates the bounds on the learning rates for other values of $p$.

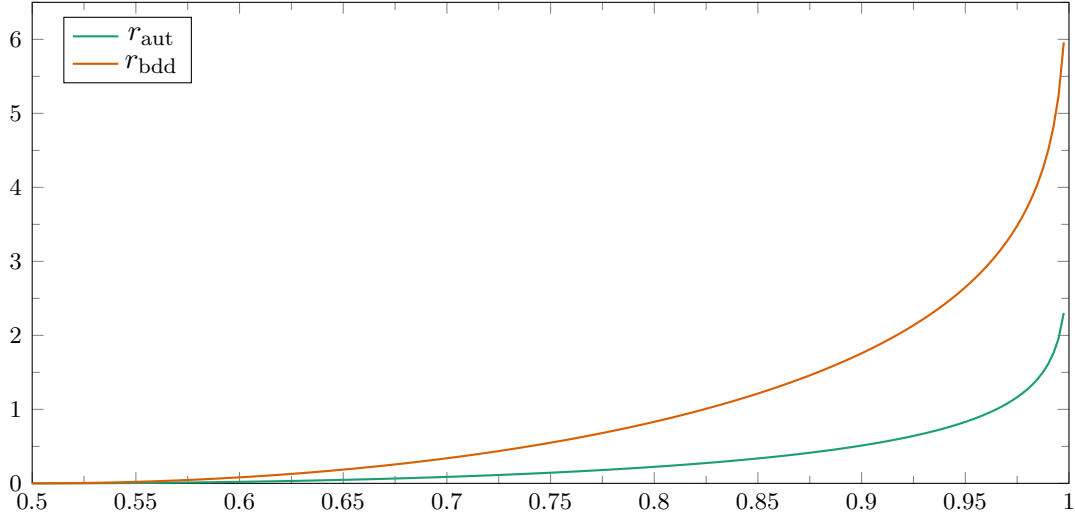
\begin{figure}[tb]
	\centering
	\pgfplotstableread[col sep = comma]{2states-2signals-symmetric-200points.dat}{\table}

	\tikzsetnextfilename{learningrates}
	\begin{tikzpicture}
	\tikzset{
		pin distance=8pt,
       every pin/.style={rectangle,rounded corners=3pt,font=\tiny},
   }
	\begin{axis}[
	   clip=false,
	   cycle list/Dark2-3, 
	   cycle multi list={
	              {solid,thick,line join=round},{dashed,line join=round}\nextlist
	              auto 
	          }, 
		xmin = 0.5, xmax = 1,
		ymin = 0, ymax = 6.5,
		xtick distance = .05,
		ytick distance = 1,
		minor tick num = 1,
		width = \textwidth-20pt,
		height = 0.5*\textwidth, 
		legend pos = north west,
		label style={font=\footnotesize},
		tick label style={font=\footnotesize},
		xlabel={},
		ylabel={},
	]

	\addplot +[y filter/.expression={y>10 ? NaN : y}] table [x = {q}, y = {raut}] {\table};
	\addlegendentry{$\raut$}
	\addplot +[y filter/.expression={y>10 ? NaN : y}] table [x = {q}, y = {rmaj}] {\table};
	\addlegendentry{$\rbdd$}

	\end{axis}
	\end{tikzpicture}
	\caption{Bounds on the learning rates for the example in \Cref{sec:leading-example}. The horizontal axis corresponds to the parameter $p$ determining the informativeness of the signal distributions. Higher values of $p$ correspond to more informative signals. The vertical axis indicates the rate of learning: $\raut$ is the learning rate of a single agent learning in autarky obtained from \Cref{prop:autarky} in \Cref{sec:app-single-agent}; $\rbdd$ is the upper bound on the learning rate of the slowest-learning agent in an arbitrary network and with arbitrary strategies obtained from \Cref{thm:main}; and for any strongly connected network with a large number of agents with the strategies described in the proof of \Cref{thm:coordination} (and sketched in \Cref{sec:proof-outlines}), each agent learns at a rate close to $\rbdd$.}
	\label{fig:learning-rates}
\end{figure}

\section{Equilibrium Learning}\label{sec:equilibrium-learning}

We turn to learning in equilibrium with geometrically discounting agents.
More precisely, suppose that all agents share a common discount factor $\delta \in [0,1)$.\footnote{The assumption that the discount factor is the same for all agents is purely for notational simplicity. All results remain valid with heterogeneous discount factors.}
The expected utility of agent $i$ for a strategy profile $\sigma$ is 
\begin{align*}
	u^i(\sigma) = \sum_{t\in T} \delta^{t-1} \ev[u(a_t^i,\omega)]
\end{align*}
where $a_t^i$ is $i$'s action in period~$t$ for the strategy profile $\sigma$.
The case $\delta = 0$ corresponds to myopic agents.
A strategy profile is a Nash equilibrium if no agent can increase her expected utility by unilaterally changing her strategy.
Since mixed strategy profiles map one-to-one to behaviorally equivalent pure strategy profiles for larger signal spaces (cf.\ \Cref{fn:pure-strategies}), it suffices to establish a bound on the learning rate for pure strategy equilibria.

\citet{HST24a} show that in any strongly connected network, all agents learn at the same rate in any equilibrium.
This implies that any upper bound on the slowest-learning agent's rate is also an upper bound on the fastest-learning agent's rate.\footnote{The proof of Lemma 2 of \citet{HST24a} does not make use of the fact that signals are conditionally independent across agents and thus applies in the current setting.}

\begin{lemma}[\citealp{HST24a}, Lemma 2]
	\label{lem:same-rate}
	For any number of agents, any strongly connected network, and any discount factor $\delta\in[0,1)$, all agents learn at the same rate in any equilibrium.
\end{lemma}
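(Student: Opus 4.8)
The plan is to prove the \emph{local} comparison that an agent learns at least as fast as any neighbor she observes, and then to \emph{globalize} it using strong connectivity. Write $j\to i$ whenever $j\in N^i$, so that $i$ observes $j$'s past actions, and let $r_i$ denote $i$'s learning rate. I would first show that $j\to i$ implies $r_i\ge r_j$. Granting this, strong connectivity supplies, for any two agents $i$ and $k$, directed observation paths in both directions; chaining the local inequality along such a path gives both $r_i\ge r_k$ and $r_k\ge r_i$, hence $r_i=r_k$, which is the assertion.

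For the local step, the feasible deviation for $i$ is to copy $j$'s previous action, i.e.\ to play $a_{t-1}^j$ in period $t$ for all $t>\tau$ while following $\sigma^i$ up to $\tau$. To connect equilibrium optimality to the \emph{asymptotic} rate, I would track the discounted continuation shortfall
\begin{align*}
	G^i(\tau)=\sum_{t\ge\tau}\delta^{t-\tau}\,\ev[u(a_\omega,\omega)-u(a_t^i,\omega)].
\end{align*}
The first key observation is that $G^i$ decays at exactly the learning rate: its $t=\tau$ term is, up to constants (for finitely many actions, via the footnote relating this shortfall to $\pr[a_t^i\neq a_\omega]$), of the order of the period-$\tau$ mistake probability, giving $\liminf_\tau -\tfrac1\tau\log G^i(\tau)\le r_i$; while the definition of $r_i$ as a $\liminf$ yields, for every $\epsilon>0$, a bound $\ev[u(a_\omega,\omega)-u(a_t^i,\omega)]\le e^{-(r_i-\epsilon)t}$ for large $t$, whose geometric tail sum is $\le C_\epsilon e^{-(r_i-\epsilon)\tau}$, giving the reverse inequality. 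Squeezing shows $\liminf_\tau -\tfrac1\tau\log G^i(\tau)=r_i$. The second observation is that optimality of $\sigma^i$ at every late information set forces $i$'s equilibrium continuation shortfall $G^i(\tau+1)$ to be no larger than the copying deviation's continuation shortfall, which equals the imitated agent's continuation shortfall up to a one-period index shift. Comparing decay rates on both sides then yields $r_i\ge r_j$.

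The step I expect to be the main obstacle is precisely this comparison, because of \emph{network feedback}. When $i$ switches to copying $j$, her action no longer reveals her private signal; if $j$ observes $i$ directly, or indirectly around a cycle (which strong connectivity guarantees exists), then $j$'s own learning can change off the equilibrium path, so the copying deviation is governed by the imitated agent's \emph{deviation} rate rather than her equilibrium rate $r_j$. One therefore has to argue that this loss of information does not depress the imitated agent's rate below $r_j$ on the relevant horizon---the delicate point that Lemma~2 of \citet{HST24a} settles. By comparison, the remaining bookkeeping---the harmless one-period delay, the interchange of the $\liminf$ with the discounted sum, and the finite-action passage between expected utility shortfalls and mistake probabilities---is routine.
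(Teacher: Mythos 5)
The paper does not actually prove this lemma: it is imported verbatim from \citet{HST24a} (their Lemma~2), and the only in-paper content is a footnote observing that the cited proof never uses conditional independence of signals across agents and therefore applies in the present, more general setting. So there is no in-paper argument to match your sketch against; the comparison has to be with the cited proof, and the question is whether your proposal stands on its own.

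Against that standard, there is a genuine gap, and you have located it yourself. The globalization step (chaining $r_i\ge r_j$ along directed observation paths and using strong connectivity in both directions) and the bookkeeping identifying the decay rate of the discounted continuation shortfall $G^i(\tau)$ with $r_i$ are fine, modulo the finite-action caveat you flag (which is not innocuous here, since the paper allows $A$ to be infinite and explicitly notes that the shortfall-based and mistake-probability-based rates then come apart). But the local comparison $j\in N^i\Rightarrow r_i\ge r_j$ is the entire mathematical content of the lemma, and your argument for it is circular: equilibrium optimality only bounds $i$'s losses by the losses of the copying deviation, and those are governed by $j$'s behavior \emph{in the deviation game}. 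Since strong connectivity guarantees a directed cycle through $i$ and $j$, the deviation---which makes $i$'s actions uninformative about her signals---feeds back into $j$'s information set and can in principle degrade $j$'s off-path learning, so the deviation value cannot be bounded below by quantities of order $e^{-r_j t}$ without further argument. You acknowledge this and describe it as ``the delicate point that Lemma~2 of \citet{HST24a} settles''---but that lemma \emph{is} the statement to be proved, so deferring to it leaves the proof incomplete at its only nontrivial step. As written, the proposal is a faithful reconstruction of the imitation argument's skeleton, not a proof.
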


This lemma and the bound on the learning rate of the slowest-learning agent from \Cref{thm:main} show that no agent can learn at a rate faster than $\rbdd$ in any equilibrium for any number of agents in any strongly connected network.

\begin{corollary}\label{cor:equilibrium1}
	For any number of agents, any strongly connected network, and any discount factor $\delta\in[0,1)$, each agent learns at a rate of at most $\rbdd$ in any equilibrium.
\end{corollary}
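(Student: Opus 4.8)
The plan is to combine the two results stated immediately above, \Cref{thm:main} and \Cref{lem:same-rate}, of which the corollary is a direct consequence. The point is that an equilibrium is, before anything else, a strategy profile, so the universally-quantified bound of \Cref{thm:main} applies to it verbatim, and \Cref{lem:same-rate} then lets me propagate the resulting bound from a single agent to all of them.

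Concretely, I would first fix an arbitrary Nash equilibrium $\sigma = (\sigma^1,\dots,\sigma^n)$ for the given number of agents, strongly connected network, and discount factor $\delta\in[0,1)$. Since $\sigma$ is in particular a profile of strategies, \Cref{thm:main} applies with no modification and yields some agent $i$ whose learning rate is at most $\rbdd$, i.e.\ $\liminf_{t\to\infty} -\tfrac1t\log\pr[a_t^i\neq a_\omega]\le\rbdd$. Second, because the network is strongly connected and $\sigma$ is an equilibrium, \Cref{lem:same-rate} applies and gives a common learning rate $r$ shared by every agent. In particular the agent $i$ from the first step learns at rate $r$, so $r\le\rbdd$; and since $r$ is the rate of \emph{every} agent, each agent learns at a rate of at most $\rbdd$, which is the claim.

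There is essentially no technical obstacle to overcome: all of the substantive work lives in the two cited results, and the corollary is the short logical bridge between them. The only point worth stating with care is why equilibrium enters at all. The bound $\rbdd$ of \Cref{thm:main} holds for \emph{arbitrary} strategies and is therefore insensitive to whether those strategies are incentive-compatible; equilibrium is invoked solely through \Cref{lem:same-rate}, whose imitation argument requires strong connectivity and equilibrium play to force all rates to coincide. Thus the bound on the fastest-learning agent is driven by the information-theoretic tradeoff captured by \Cref{thm:main} rather than by strategic considerations, and nothing in the argument uses any property of $\delta$ beyond $\delta\in[0,1)$. This is precisely the conceptual message the paper emphasizes.
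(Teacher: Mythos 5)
Your proposal is correct and follows exactly the paper's own argument: the paper likewise derives the corollary by applying \Cref{thm:main} to the equilibrium strategy profile to bound the slowest-learning agent's rate by $\rbdd$, and then invoking \Cref{lem:same-rate} to conclude that all agents share that rate in any equilibrium on a strongly connected network. Nothing is missing.
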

This extends Theorem 1 of \citet{HST24a} to the case when the signals are not conditionally independent across agents and achieves a tighter bound on the learning rate.\footnote{The bound on the learning rate obtained by \citet{HST24a} is $2\max_{f\neq g}\max_{i\in N}\sup_{s\in S} |\ell^i_{f,g}(s)| \ge \rbddt > 2\rbdd$. 
However, their model differs slightly from ours. 
They allow the signal space and the distribution of signals to vary across periods and agents as long as the log-likelihood ratios of signals are bounded uniformly over states, agents, and periods. On the other hand, they assume that the action space and the signal space are finite and that signals are conditionally independent across periods and agents.}
\citet{HMST21a} obtain a smaller upper bound than \Cref{cor:equilibrium1} for two states and complete networks when the signals are conditionally independent and identically distributed across periods and agents, and the agents are myopic ($\delta = 0$).
The main feature of these results is that each agent's learning rate is bounded from above independently of the number of agents, showing that all but a vanishing fraction of private information is lost in large networks.
Giving a nontrivial lower bound on the equilibrium learning rate analogous to \Cref{thm:coordination} is an open problem (cf.\ \Cref{rem:lower-bound-equilibrium}).

For the rest of this section, we depart from the model and assume that agents observe their neighbors' actions and signals, and ask if this improves equilibrium learning.
In this alternative model, any nontrivial bound on the learning rate has to depend on the network structure. 
Indeed, if a rational agent observes everyone's signals, she learns at a rate of $n\raut^i$, which is not bounded independently of $n$.
The problem becomes interesting when the neighborhoods have bounded size.
Denote by $\Delta = \max_{i\in N} |N^i|$ the size of the largest neighborhood.
We show that even if agents observe their neighbors' actions and signals, the learning rate of each agent is at most $\Delta \rbdd$ in any equilibrium and any strongly connected network for any number of agents.
This result assumes that signals are conditionally independent and identically distributed across periods and agents. 
\begin{corollary}\label{cor:locally-public-signals}
	Assume the signals are conditionally independent across periods and agents.
	For any number of agents, any strongly connected network, and any discount factor $\delta\in[0,1)$, each agent learns at a rate of at most $\Delta \rbdd$ in any equilibrium.
\end{corollary}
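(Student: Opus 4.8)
The plan is to reduce the alternative model, in which each agent observes her neighbors' signals in addition to their actions, to the standard model of \Cref{thm:main} by bundling signals. I would introduce an auxiliary instance of the standard model on the same agent set $N$ and the same network, but in which agent $i$'s private signal in period $t$ is the vector $\tilde{\mathfrak s}^i_t = (\mathfrak s^j_t)_{j\in N^i}$ collecting her own signal together with those of all her neighbors. These bundled signal profiles are still conditionally independent across periods (the originals are) and are typically correlated across agents whenever neighborhoods overlap, but \Cref{thm:main} permits arbitrary correlation across agents. Before proceeding I would verify that the auxiliary instance satisfies the standing assumptions: the bundled distributions inherit mutual absolute continuity from the $\mu_f$, and the associated log-likelihood ratio $\tilde\ell^i_{f,g} = \sum_{j\in N^i}\ell^j_{f,g}$ is a finite sum of terms with finite cumulant generating functions, hence has one itself, and is not identically zero because $i\in N^i$.

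The second step is to observe that any strategy profile in the alternative model induces, after this relabeling, a strategy profile in the auxiliary instance with the same conditional distribution over sequences of action profiles. Indeed, before acting in period $t$ an agent in the alternative model knows at most her neighbors' signals through period $t$ and their actions through period $t-1$, which is contained in the information $(\tilde{\mathfrak s}^i_{\le t}; (a^j_{<t})_{j\in N^i})$ comprising the auxiliary information set $\mathcal I^i_{\le t}$. Thus every alternative-model strategy is a (possibly coarser) auxiliary-instance strategy, and the two induce identical play, so that each agent's mistake probabilities and hence learning rates coincide across the two models. In particular, an equilibrium of the alternative model maps to a strategy profile to which \Cref{thm:main} applies.

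The third step is the computation of the bound, and this is where conditional independence and identical distribution across agents enter. Since $\evcond[f]{\ell^j_{f,g}}$ does not depend on $j$ and is strictly positive, $\evcond[f]{\tilde\ell^i_{f,g}} = |N^i|\,\evcond[f]{\ell_{f,g}}$, so the bound that \Cref{thm:main} delivers for the auxiliary instance is
\begin{align*}
	\min_{f\neq g}\,\max_{i\in N} |N^i|\,\evcond[f]{\ell_{f,g}} = \Delta\,\min_{f\neq g}\evcond[f]{\ell_{f,g}} = \Delta\,\rbdd.
\end{align*}
Hence some agent learns at a rate of at most $\Delta\rbdd$ in the given equilibrium. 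Finally, because the network is strongly connected and the imitation argument underlying \Cref{lem:same-rate} uses only that each agent observes her neighbors' actions (which still holds here), all agents learn at the same equilibrium rate, and the per-agent bound $\Delta\rbdd$ follows.

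I expect the main obstacle to be the precise bookkeeping in the second step: one must confirm that the alternative model's observation structure—regardless of the exact timing with which neighbors' signals become visible—is genuinely contained in the auxiliary information set $\mathcal I^i_{\le t}$, so that the reduction is an honest inclusion of strategy spaces rather than a mere heuristic correspondence, and that the equilibrium property invoked by \Cref{lem:same-rate} is preserved under this relabeling.
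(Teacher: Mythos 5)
Your proposal is correct and follows essentially the same route as the paper: bundle each agent's neighborhood signals into $\tilde{\mathfrak s}_t^i=(\mathfrak s_t^j)_{j\in N^i}$, note that conditional independence across agents gives $\tilde\ell^i_{f,g}=\sum_{j\in N^i}\ell^j_{f,g}$ and hence the bound $\Delta\rbdd$ from \Cref{thm:main} applied to the (now cross-agent correlated) bundled signals, and then invoke \Cref{lem:same-rate} to lift the bound from the slowest-learning agent to all agents. Your extra checks of the standing assumptions and of the strategy-space inclusion are sound refinements of the same argument, not a different one.
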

First, it is clear from \Cref{lem:same-rate} that all agents learn at the same rate in any equilibrium.
It thus suffices to show that some agent learns at most at the claimed rate. 
To this end, we embed the setup of \Cref{cor:locally-public-signals} in the model where each agent only observes her neighbors' actions but not their signals.
For each $i\in N$ and $t\in T$, let $\tilde {\mathfrak s}_t^i = (\mathfrak s_t^j)_{j\in N^i}$ be the vector with the signals of all of $i$'s neighbors in period $t$.
The distribution of $\tilde{\mathfrak s}_t^i$ in state $f\in\Omega$ is the $|N^i|$-fold product $\tilde\mu_f^i = (\mu_f^i)^{\otimes |N^i|}$ of $\mu_f^i$.
Thus, defining $\tilde \ell_{f,g}^i = \log \frac{d\tilde\mu_f^i}{d\tilde\mu_{g}^i}(\tilde{\mathfrak s}_1^i)$ for two states $f,g\in\Omega$, we have $\tilde \ell_{f,g}^i = \sum_{j\in N^i} \ell_{f,g}^j$ since the signals $(\mathfrak s_1^j)_{j\in N^i}$ are conditionally independent.
Thus,
\begin{align*}
	\min_{f\neq g} \max_{i\in N} \evcond[f]{\tilde\ell_{f,g}^i} = \min_{f\neq g} \max_{i\in N} \sum_{j\in N^i} \evcond[f]{\ell_{f,g}^j} \le \max_{i\in N} |N^i| \min_{f\neq g} \max_{i\in N} \evcond[f]{\ell_{f,g}^i} = \Delta \rbdd
\end{align*}
Applying \Cref{thm:main} to the modified signals $(\tilde{\mathfrak s}_t^i)_{i\in N,t\in T}$ gives that some agent learns at a rate of at most $\Delta\rbdd$, concluding the proof.
Note that the modified signals are not conditionally independent across agents so that the full generality of \Cref{thm:main} was needed.

\section{Proof Outlines}\label{sec:proof-outlines}

In this section, we sketch the proofs of the two main results.

\subsection{About the Proof of \Cref{thm:main}}
\label{sec:proof-thm1}

In contrast to earlier results, \Cref{thm:main} provides an upper bound on the learning rate for arbitrary strategy profiles rather than only equilibria and is thus, in principle, harder to establish. 
However, the greater generality makes clear that the argument cannot rely on analyzing belief dynamics or incentives.
Moreover, it is clearly without loss to assume that the network is complete, which is not obvious for equilibrium learning.

The proof of \Cref{thm:main} greatly simplifies when additionally assuming that the signals are conditionally independent across agents, requiring that the signals' log-likelihood ratios are bounded, and loosening the bound on the learning rate.
This case already captures the main tension in the model.
We sketch its proof and then elaborate on how to adapt it to get \Cref{thm:main}. 

\begin{restatable}[Learning is bounded, weak form]{manualtheorem}{mainindependent}\label{thm:independent-signals}
	Assume the signals are conditionally independent across periods and agents, and the log-likelihood ratio $\ell_{f,g}^i(s)$ is bounded for each agent $i$ and any two states $f,g$. For any number of agents $n$, any network, and any strategies $\sigma^1,\dots,\sigma^n$, some agent learns at a rate of at most $\rbddt = 2\min_{f\neq g} \max_{i\in N} \sup_{s\in S} |\ell_{f,g}^i(s)|$.
\end{restatable}

We focus on the case of two states $\{f,g\}$.
The intuition is as follows.
Suppose all agents learn faster than the prescribed rate.
Then in state $g$, each agent chooses $a_g$ except for a set of signal trajectories with very small probability in any sufficiently late period.
But then in state $f$, these sets still have a moderately small probability, and so with positive probability, all agents incorrectly choose $a_g$ in each sufficiently late period.
This contradicts that agents learn at all.

	In more detail, assume for contradiction that all agents learn at a rate of at least $\rbddt + 3\epsilon$ for some positive $\epsilon$.
A preliminary lemma turns the limit defining the learning rate into a statement about each sufficiently late period at a small cost in the learning rate.
More precisely, \Cref{lem:Ht0} exhibits a history $H_{\le t_0}$ with positive probability in state $g$ (and thus also in state $f$) such that $$\prcond[g]{a_t^i\neq a_g \mid H_{\le t_0}} \le e^{-(\rbddt + 2\epsilon)t}$$ for each agent $i$ and each period $t > t_0$, and $t_0$ may be chosen arbitrarily large.
In the proof, we show that there exists a set of signal profile trajectories $\mathcal S_{\le t_0}\subset S^{N\times t_0}$ such that (i) $\prcond[g]{\mathcal S_{\le t_0}} > 0$, (ii) each $s_{\le t_0}\in\mathcal S_{\le t_0}$ is consistent with $H_{\le t_0}$, and (iii) for each $i\in N$, each $t\in T$, and almost every $s_{\le t_0}\in\mathcal S_{\le t_0}$,
\begin{align*}
	\prcond[g]{a_t^i\neq a_g \mid s_{\le t_0}} \le e^{-(\rbddt + \epsilon)t}
\end{align*}

Denote by $H$ the infinite history that follows $H_{\le t_0}$ up to $t_0$ and for which each agent chooses $a_g$ in each period after $t_0$.
We have that $\prcond[g]{H \mid s_{\le t_0}} \ge \frac12$ for almost every $s_{\le t_0}\in\mathcal S_{\le t_0}$ if $t_0$ is large enough, since the probability that some agent chooses an action different from $a_g$ in any period after $t_0$ is small by the preceding inequality. 
We say that agent $i$ defects in period $t> t_0$ if $i$ is the first agent to deviate from $H$, i.e., if play follows $H_{<t}$ in the first $t-1$ periods and $a_t^i \neq a_g$.
Consider the set $D_t^i$ of those trajectories of $i$'s signals for which $i$ defects in period $t$.
We aim to bound the probability of each $D_t^i$ conditional on state $f$ and signal profile trajectories in $\mathcal S_{\le t_0}$.

If $i$ receives signals in $D_t^i$ and no other agent defects before $t$, then $a_t^i \neq a_g$, and so 
\begin{align*}
	\prcond[g]{D_t^i, \forall j \neq i, H^j_{<t} \mid s_{\le t_0}} \le \prcond[g]{a_t^i \neq a_g \mid s_{\le t_0}} \le e^{-(\rbddt + \epsilon)t}
\end{align*}
for almost every $s_{\le t_0}\in \mathcal S_{\le t_0}$.
Since signals are conditionally independent across periods and agents, the probability on the left-hand side factors.
\begin{align*}
	\prcond[g]{D_t^i, \forall j \neq i, H^j_{<t} \mid s_{\le t_0}} = \prcond[g]{D_t^i \mid s_{\le t_0}} \prcond[g]{\forall j \neq i, H^j_{<t} \mid s_{\le t_0}}
\end{align*}
for almost every $s_{\le t_0}\in\mathcal S_{\le t_0}$.
But
\begin{align*}
	\prcond[g]{\forall j \neq i, H^j_{<t} \mid s_{\le t_0}} \geq \prcond[g]{H \mid s_{\le t_0}} \ge \frac12
\end{align*}
and so $\prcond[g]{D_t^i \mid s_{\le t_0}} \le 2e^{-(\rbddt + \epsilon)t}$ for almost every $s_{\le t_0}\in\mathcal S_{\le t_0}$.
Since signals are conditionally independent across agents, we have $\prcond[g]{D_t^i \mid s_{\le t_0}} = \prcond[g]{D_t^i \mid s_{\le t_0}^i}$ and similarly for $f$ instead of $g$.
Hence, from the definition of $\rbddt$ and the conditional independence of signals across periods, we see that 
\begin{align*}
	\prcond[f]{D_t^i \mid s_{\le t_0}} = \prcond[f]{D_t^i \mid s_{\le t_0}^i } \le e^{\rbddt t} \prcond[g]{D_t^i \mid s_{\le t_0}^i } = e^{\rbddt t} \prcond[g]{D_t^i \mid s_{\le t_0}} \le 2 e^{-\epsilon t}
\end{align*}
for almost every $s_{\le t_0}\in\mathcal S_{\le t_0}$, since the probability of $D_t^i$ increases by a factor of at most $e^{\frac12 \rbddt t}$ and the probability of $s_{\le t_0}^i$ decreases by a factor of at most $e^{\frac12 \rbddt t_0}$ when conditioning on $f$ rather than $g$.
If $t_0$ is large enough, then $2n \sum_{t > t_0} e^{-\epsilon t} \le \frac{1}{2}$, and so the probability that some agent defects after a signal profile trajectory in $\mathcal S_{\le t_0}$ is thus at most $\frac12$ even in state $f$.
But then, with positive probability, all agents choose the incorrect action $a_g$ in all periods after $t_0$ in state $f$.
Thus, in state $f$, each agent's mistake probability in each period after $t_0$ is bounded below by a positive constant, so her learning rate is $0$, contradicting the assumption.

The proof of \Cref{thm:main} needs to address two further points. 
First, it needs an argument that only loses $e^{\rbdd t}$ instead of $e^{\rbddt t}$ in the last inequality above.
This is taken care of with a simple trick.
Second, one cannot use the conditional independence of the agents' private signals.
This is a substantial hurdle that requires a delicate argument.

\subsection{About the proof of \Cref{thm:coordination}}
\label{sec:proof-thm2}

To prove \Cref{thm:coordination}, we construct strategies for which each agent learns at a rate close to $\rbdd$ or faster.
For complete networks, such strategies are easy to describe: each agent follows her private signals as long as those are sufficiently decisive, and otherwise she follows the action taken by most agents in the previous period.

In more detail, consider again the case of two states $f$ and $g$.
Agent $i$'s log-likelihood ratio for $f$ over $g$ in period~$t$ given the agent's private signals $\mathfrak s_{\le t}^i$ is close to $\evcond[f]{\ell_{f,g}^i} t$ with high probability in state $f$, and similarly in state $g$.
If this log-likelihood ratio for $f$ over $g$ is larger than $(\evcond[f]{\ell_{f,g}^i} - \epsilon) t$, agent $i$ chooses $a_f$ independently of the other agents' actions; similarly, if $i$'s log-likelihood ratio for $g$ over $f$ is larger than $(\evcond[g]{\ell_{g,f}^i} - \epsilon) t$, she chooses $a_g$; otherwise, she chooses the action that most agents chose in period $t-1$.

These strategies lead to faster learning than autarky.
First, agents decide based on their private signals only if those clearly favor one state, and so in that case, mistakes are less likely than when always relying on one's private signals. 
Second, most agents decide based on their private signals for late periods since then each agent's likelihood ratio is close to its expectation with high probability, and each of these actions is correct with high probability independently of the others. 
Hence, the most popular action in any period is very likely to be correct if there are many agents.
So either case improves upon learning in autarky.
Note, however, that each agent could unilaterally achieve a higher rate than $\rbdd$ by following the most popular action in each period.
In particular, these strategies are not an equilibrium for rational and geometrically discounting agents.

For a strongly connected but not necessarily complete network, one can use a similar idea.
The only obstacle is that agents do not know the most popular action of the previous period if they cannot see all other agents' actions.
We deal with this issue by dedicating some periods to propagating information about the agents' actions in previous periods through the network.
More precisely, in these propagation periods, each agent mimics the action of a neighbor in a recent previous period, or of a neighbor's neighbor (which they know through their neighbors' mimicking behavior), and so on for any distance level.
In the remaining periods, agents act as before: they follow their private signals if those are sufficiently decisive, and otherwise they follow the action taken by most agents in the latest non-propagation period, which they have learned in the intervening propagation periods.

\section{Discussion}\label{sec:discussion}

We conclude with several remarks about assumptions, model variations, and open problems.

\begin{remark}[Genericity of the utility function]\label{rem:non-unique-optimal-actions}
	We have assumed that the agents' utility functions are suitably generic, i.e., that no action is optimal in two different states and there is a unique optimal action in each state. 
	The first assumption is necessary to make the problem interesting: if the same action is optimal in all states, there is no need for information, and all agents can choose optimally from the first period onward.
	The second assumption forces a tradeoff between correct action choices and using actions as messages.
	By contrast, if there are two optimal actions in each state and signals are binary, then each agent can choose an action optimally based on her available information and simultaneously communicate her private signal in each period.
	Thus, actions are sufficient for identifying private signals and learning is as fast as if agents observed their neighbors' private signals, so that \Cref{thm:main} fails.
	Note that the above strategies are even in equilibrium if the network is complete.\footnote{This equilibrium is reminiscent of a construction by \citet{HRS15a}, who consider bandit problems where agents observe each other's actions, but not the payoffs.
	They allow agents to communicate via cheap talk messages and show that cheap talk equilibria can replicate any equilibrium with publicly observable payoffs. 
	In our model, a multiplicity of optimal actions enables similar cheap talk communication and can restore the public information case in equilibrium.}
\end{remark}

\begin{remark}[Arbitrary signal correlation]\label{rem:correlated-signals}
	We have assumed that the conditional distributions of signal profiles are mutually absolutely continuous and that the signals are conditionally independent across periods.
	\Cref{thm:main} breaks down emphatically without these assumptions. 
	
	Appropriate correlation across agents or periods allows even a small number of agents or a single agent in a small number of periods to learn the state with certainty.
	For example, consider the instance in \Cref{sec:leading-example} with signal precision $p = \frac23$.
	If there are three agents and the distribution of signal profiles is such that exactly two agents receive a signal matching the state conditional on either state, and each agent chooses the action matching her signal in the first period, then all agents know the state from the second period onward and can choose optimally in all future periods.
	For correlation across periods, consider a single agent who receives exactly two signals matching the state in the first three periods.
	This reveals the state in the first three periods. 
\end{remark}

\begin{remark}[Lower bounds for equilibrium learning]\label{rem:lower-bound-equilibrium}
	\Cref{thm:coordination} shows that non-trivial information aggregation is possible if the agents follow prescribed strategies. 
	However, it remains an open problem whether there is any equilibrium for any network in which every agent learns faster than a single agent in autarky.
	Answering this question would likely require either new conceptual insights into equilibrium learning or explicitly constructing an equilibrium in which learning exceeds the autarky benchmark.
	
	A related question is whether the mediation of the information exchanged between the agents can improve equilibrium learning.
	The mediator observes all agents' actions but not their private signals and sends a private message to each agent.
	A special case is designing the network structure.
	For example, it could be beneficial for equilibrium learning if the mediator rewards an agent for deviating from a consensus action by giving her access to more agents' actions in future periods.
	This incentivizes agents to act based on their private signals rather than herd, potentially improving equilibrium learning.
\end{remark}

\begin{remark}[Learning with experimentation]\label{rem:experimentation}
	We shut down experimentation motives by assuming that agents do not observe their flow utilities. 
	In \Cref{fn:experimentation}, we explain that this allows for noisy observations of utilities as long as those do not provide information beyond next period's private signal. 
	\Cref{thm:main} survives (with a different constant replacing $\rbdd$) even if agents make noisy observations of their utilities for which future signals are not a sufficient statistic. 
	Suppose whenever agent $i$ chooses action $a$ in state $f$, her observed utility, denoted $u_t^i(a)$, follows a distribution $\nu_{a,f}^i$ on $\mathbb R$.
	While this model allows for experimentation, it embeds into ours by replacing each private signal $\mathfrak s_t^i$ with $\tilde{\mathfrak s}_t^i = (\mathfrak s_t^i,(u_t^i(a))_{a\in A})$. 
	Assuming the augmented signals $(\tilde{\mathfrak s}_t^i)_{i\in N,t\in T}$ satisfy our hypothesis, \Cref{thm:main} still yields an upper bound on the learning rate.
	This is also an upper bound for the original instance with observed utilities $u_t^i(a)$ since the embedding provides each agent with weakly more information.
\end{remark}

\begin{remark}[Random networks]\label{rem:random-networks}
	In an extension of our model, the network is drawn randomly (and independently from the state and the signals) according to some distribution in each period.  
	The upper bound on the learning rate from \Cref{thm:main} clearly remains valid in this setting since it holds even for complete networks.
	\Cref{thm:coordination} also survives so long as all agents know each period's network and all networks are strongly connected.
	We sketch the necessary changes to the strategies in \Cref{fn:random-networks} in the proof of \Cref{thm:coordination}.
\end{remark}

\section*{Acknowledgements}
The author thanks seminar audiences in Berlin, Bonn, and Warwick, as well as Krishna Dasaratha, Johannes H\"orner, Kate Huang, and Ilan Kremer for comments on the paper.
The author also acknowledges support by the DFG under the Excellence Strategy EXC-2047.

\newpage
\appendix

\section*{APPENDIX}\label{sec:appendix}

\section{Preliminaries}\label{sec:app-prelims}

In this appendix, we recall a classic result from the theory of large deviations of random walks and start by setting up the notation.
 
The cumulant generating function of a random variable $\ell$ is
\begin{align*}
	\lambda(z) = \log \ev[e^{z\ell}]
\end{align*}
We assume that $\ell$ is non-degenerate and that $\lambda(z)$ is finite for each $z\in\mathbb R$, which implies that $\ev[\ell]$ is finite.
The Fenchel-Legendre transform of $\lambda$ is
\begin{align*}
	\lambda^*(\eta) = \sup_{z\in\mathbb R} \eta z - \lambda(z)
\end{align*}
We collect some properties of $\lambda$ and $\lambda^*$.
\begin{lemma}[\citealp{DeZe10a}, Lemma 2.2.5]\label{lem:lambda-star-properties} 
	Let $I^* = \{\eta\in\mathbb R\colon \exists z \in \mathbb R, \lambda'(z) = \eta\}$.
	Then,\footnote{Since we assume that $\ell$ is non-degenerate and has finite cumulants, \Cref{lem:lambda-star-properties} avoids some case distinctions compared to Lemma~2.2.5 of \citet{DeZe10a} and allows strengthening convexity of $\lambda$ to strict convexity. See also Lemma 5 of \citet{HMST21a}.}
	\begin{enumerate}[label=(\roman*)]
	\item $\lambda$ is strictly convex, and $\lambda^*$ is non-negative and convex.\label{item:lambda-star-properties-convex}
	\item For all $\eta \ge \ev[\ell]$,
	\begin{align*}
		\lambda^*(\eta) = \sup_{z \ge 0} \eta z - \lambda(z)
	\end{align*} 
	and for all $\eta \le \ev[\ell]$,
	\begin{align*}
		\lambda^*(\eta) = \sup_{z\le 0} \eta z - \lambda(z)
	\end{align*}
	In particular, $\lambda^*$ is non-decreasing on $[\ev[\ell],\infty)$ and strictly increasing on $I^*\cap [\ev[\ell],\infty)$, and it is non-increasing on $(-\infty,\ev[\ell]]$ and strictly decreasing on $I^* \cap (-\infty,\ev[\ell]]$.
	Moreover, $\lambda^*(\ev[\ell]) = 0$.\label{item:lambda-star-properties-monotonicity}
	\item $\lambda$ is differentiable with 
	\begin{align*}
		\lambda'(z) = \frac{\ev[\ell e^{z\ell}]}{\ev[e^{z\ell}]}
	\end{align*}
	and if $\lambda'(z) = \eta$, then $\lambda^*(\eta) = \eta z - \lambda(z)$.\label{item:lambda-differentiable}
\end{enumerate}
\end{lemma}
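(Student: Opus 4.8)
The plan is to extract everything from the first two derivatives of $\lambda$, so first I would differentiate under the expectation. Global finiteness of $\lambda$ makes $z\mapsto\ev[e^{z\ell}]$ finite on all of $\mathbb R$ and lets me dominate the difference quotients of $e^{z\ell}$ on any compact $z$-interval by an integrable envelope (via the mean value theorem together with finiteness of $\lambda$ at the interval's endpoints), so I may differentiate twice under the integral. This yields $\lambda'(z)=\ev[\ell e^{z\ell}]/\ev[e^{z\ell}]$, which is the formula asserted in (iii), and $\lambda''(z)=\left(\ev[\ell^2 e^{z\ell}]\ev[e^{z\ell}]-\ev[\ell e^{z\ell}]^2\right)/\ev[e^{z\ell}]^2$. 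The latter is exactly the variance of $\ell$ under the exponentially tilted law $d\mathbb P_z\propto e^{z\ell}\,d\mathbb P$; since $\ell$ is non-degenerate under $\mathbb P$ it remains non-degenerate under the equivalent measure $\mathbb P_z$, so $\lambda''(z)>0$ and $\lambda$ is strictly convex, giving the first half of (i).

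Next I would finish (i) and (iii). Convexity of $\lambda^*$ is immediate, as it is the supremum over $z$ of the maps $\eta\mapsto\eta z-\lambda(z)$, each affine in $\eta$; non-negativity follows by taking $z=0$ in the defining supremum, since $\lambda(0)=0$ yields $\lambda^*(\eta)\ge 0$. For the maximizer identity in (iii), set $h(z)=\eta z-\lambda(z)$, which is concave because $\lambda$ is convex, with $h'(z)=\eta-\lambda'(z)$; if $\lambda'(z)=\eta$ then $z$ is a critical point of the concave $h$, hence its global maximizer, so $\lambda^*(\eta)=h(z)=\eta z-\lambda(z)$.

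Then I would prove (ii). Applying (iii) at $z=0$, where $\lambda'(0)=\ev[\ell]$, gives $\lambda^*(\ev[\ell])=\ev[\ell]\cdot 0-\lambda(0)=0$. To restrict the supremum, fix $\eta\ge\ev[\ell]$: since $\lambda'$ is strictly increasing (from $\lambda''>0$) with $\lambda'(0)=\ev[\ell]$, we have $h'(z)=\eta-\lambda'(z)>0$ for every $z<0$, so the concave $h$ increases on $(-\infty,0]$ and its supremum is attained on $[0,\infty)$, proving $\lambda^*(\eta)=\sup_{z\ge 0}\eta z-\lambda(z)$; the symmetric argument handles $\eta\le\ev[\ell]$. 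Monotonicity follows because, for $\eta\ge\ev[\ell]$, each term $\eta z-\lambda(z)$ with $z\ge 0$ is non-decreasing in $\eta$, so the supremum $\lambda^*$ is non-decreasing on $[\ev[\ell],\infty)$, and symmetrically non-increasing on $(-\infty,\ev[\ell]]$. For the strict statements I would invoke Legendre duality: on $I^*\cap[\ev[\ell],\infty)$ there is $z$ with $\lambda'(z)=\eta$ and $(\lambda^*)'(\eta)=z$, where $z>0$ as soon as $\eta>\ev[\ell]=\lambda'(0)$ because $\lambda'$ is strictly increasing, so $\lambda^*$ is strictly increasing there; again symmetrically on the left.

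The main obstacle is the rigorous interchange of differentiation and expectation in the first step — carried out twice — which is precisely what the hypothesis that $\lambda$ is finite on all of $\mathbb R$ buys us; once those two formulas are secured, the rest is elementary convexity and monotonicity bookkeeping. A secondary technical point is the strict-monotonicity claim on $I^*$, which requires identifying $(\lambda^*)'$ with the inverse of $\lambda'$ on the relevant half-line and checking the sign of the corresponding maximizer $z$.
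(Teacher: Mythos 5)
Your argument is correct and is essentially the standard proof that the paper delegates to its references (Lemma~2.2.5 of \citet{DeZe10a} and Lemma~5 of \citet{HMST21a}): differentiation under the expectation justified by global finiteness of $\lambda$, the tilted-measure variance identity for strict convexity, and elementary convexity/duality bookkeeping for the remaining claims. No substantive differences or gaps.
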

It follows from~\ref{item:lambda-differentiable} that $\lambda'(0) = \ev[\ell]$ and that $\lambda'(z) \rightarrow \sup \ell$ as $z\rightarrow\infty$ and $\lambda'(z)\rightarrow \inf \ell$ as $z\rightarrow-\infty$.
Hence, $I^* = (\inf \ell,\sup\ell)$, and so $I^*$ contains an open neighborhood of $\ev[\ell]$ (where we use that $\ell$ is non-degenerate).

Let $\ell_1,\ell_2,\dots$ be independent random variables with the same distribution as $\ell$, and for $t\in T$, let $L_t = \sum_{r \le t} \ell_r$.
The following result shows that the probability that $L_t$ is less than $\eta t$ (plus a lower-order term) decreases exponentially in $t$ at rate $\lambda^*(\eta)$ if $\eta$ is smaller than $\ev[\ell]$, and similarly if $\eta$ is larger than $\ev[\ell]$.
\begin{theorem}[\citealp{Cram38a}]\label{thm:large-deviations}
	Let $\eta\in\mathbb R$.
	If $\inf_{z\in\mathbb R} \lambda'(z) < \eta \le \ev[\ell]$, then
	\begin{align*}
		\pr[L_t \le \eta t + o(t)] = e^{-\lambda^*(\eta)t + o(t)}
	\end{align*}
	and if $\ev[\ell] \le \eta < \sup_{z\in\mathbb R} \lambda'(z)$, then
	\begin{align*}
		\pr[L_t \ge \eta t + o(t)] = e^{-\lambda^*(\eta)t + o(t)}
	\end{align*}
\end{theorem}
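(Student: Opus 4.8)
The final statement is Cramér's theorem, and I would prove it by the classical two-part route: an exponential Markov bound for the upper estimate and an exponential change of measure for the matching lower estimate. First note that the two assertions are equivalent under the reflection $\ell \mapsto -\ell$: this sends $\ev[\ell]$ to $-\ev[\ell]$, turns the upper tail at $\eta$ into the lower tail at $-\eta$, and (since the cumulant generating function becomes $z \mapsto \lambda(-z)$) leaves $\lambda^*(\eta)$ unchanged. So it suffices to treat the lower tail $\eta \le \ev[\ell]$. Throughout I would fix $z^* \le 0$ with $\lambda'(z^*) = \eta$, which exists because the hypothesis $\inf_z \lambda'(z) < \eta \le \ev[\ell] = \lambda'(0)$ places $\eta$ in $I^* = (\inf \ell, \sup \ell)$ while $\lambda'$ is continuous and strictly increasing by the strict convexity in \Cref{lem:lambda-star-properties}, part~\ref{item:lambda-star-properties-convex}. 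Part~\ref{item:lambda-differentiable} of the same lemma then yields the identity $\lambda^*(\eta) = \eta z^* - \lambda(z^*)$, which drives both bounds.

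For the upper bound, fix any $\phi(t) = o(t)$. Since $z^* \le 0$, the event $\{L_t \le \eta t + \phi(t)\}$ equals $\{e^{z^* L_t} \ge e^{z^*(\eta t + \phi(t))}\}$, so Markov's inequality together with $\ev[e^{z^* L_t}] = e^{t\lambda(z^*)}$ (by independence of the $\ell_r$) gives
\begin{align*}
	\pr[L_t \le \eta t + \phi(t)] \le e^{-z^*(\eta t + \phi(t)) + t\lambda(z^*)} = e^{-t\lambda^*(\eta) - z^*\phi(t)}.
\end{align*}
Because $z^*$ is a fixed constant and $\phi(t) = o(t)$, the correction $-z^*\phi(t)$ is $o(t)$, so $\pr[L_t \le \eta t + \phi(t)] \le e^{-\lambda^*(\eta)t + o(t)}$. (Optimizing over all $z \le 0$ would recover the supremum characterization in \Cref{lem:lambda-star-properties}, part~\ref{item:lambda-star-properties-monotonicity}, but $z^*$ already attains it.)

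For the lower bound I would introduce the exponentially tilted law $\mathbb Q$ under which $\ell_1, \ell_2, \dots$ remain i.i.d.\ with single-coordinate density $\frac{d\mathbb Q}{d\pr} = e^{z^*\ell - \lambda(z^*)}$; under $\mathbb Q$ the increment has mean $\lambda'(z^*) = \eta$ and finite variance. Fix $\delta > 0$ and restrict to the window $W_t = \{\eta t - \delta t \le L_t \le \eta t\}$. Undoing the change of measure and bounding $e^{-z^* L_t} \ge e^{-z^*\eta t + z^*\delta t}$ on $W_t$ (valid since $z^* \le 0$) gives
\begin{align*}
	\pr[L_t \le \eta t] \ge \pr[W_t] = \ev_{\mathbb Q}\!\left[e^{-z^* L_t + t\lambda(z^*)}\, \mathbf 1_{W_t}\right] \ge e^{-t\lambda^*(\eta) + z^*\delta t}\, \mathbb Q(W_t).
\end{align*}
Since $L_t$ has $\mathbb Q$-mean exactly $\eta t$, the law of large numbers kills the mass below $\eta t - \delta t$ and the central limit theorem gives $\mathbb Q(W_t) \to \tfrac12$, so $\mathbb Q(W_t)$ is bounded below by a positive constant. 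Taking $-\tfrac1t \log$, letting $t \to \infty$ and then $\delta \to 0$ yields $\limsup_t -\tfrac1t\log \pr[L_t \le \eta t] \le \lambda^*(\eta)$, matching the upper bound; replacing the threshold $\eta t$ by $\eta t + \phi(t)$ only enlarges the event when $\phi \ge 0$ and otherwise shifts the window by $o(t)$, which leaves the rate unchanged.

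The two exponential inequalities are routine; the step needing genuine care is the lower bound, and within it the claim that $\mathbb Q$ assigns non-vanishing mass to $W_t$. This is exactly where the standing hypotheses matter: non-degeneracy of $\ell$ and finiteness of $\lambda$ on all of $\mathbb R$ guarantee that $\mathbb Q$ is well defined and that $\ell$ has positive finite variance under $\mathbb Q$, so the central limit theorem applies. The other delicate point is the order of limits—one must extract the rate as $t \to \infty$ before sending $\delta \to 0$, so that the lossy factor $z^*\delta t$ is removed only after it has been separated from the exponential rate.
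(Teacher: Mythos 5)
Your proof is correct in substance, but it takes a different route from the paper: the paper does not prove this theorem at all, instead deducing it from Theorem~2.2.3 of Dembo--Zeitouni (using the monotonicity of $\lambda^*$ on either side of $\ev[\ell]$ recorded in \Cref{lem:lambda-star-properties}) or from Theorem~6 of Harel et al.\ applied to $\ell$ and $-\ell$. You instead give the classical self-contained argument---Chernoff bound at the optimizing tilt $z^*$ for the upper estimate, exponential change of measure plus CLT for the lower estimate---which is essentially the proof living inside the cited references; the reflection $\ell\mapsto-\ell$ reducing the upper tail to the lower tail is exactly the reduction the paper also invokes. What your write-up buys is transparency about where the standing hypotheses (non-degeneracy and finiteness of $\lambda$ on all of $\mathbb R$) enter, namely in guaranteeing $z^*$ exists and the tilted increment has positive finite variance. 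One point is stated too loosely: in the lower bound, "shifting the window by $o(t)$" does not by itself preserve $\mathbb Q(W_t)\to\frac12$, since a negative shift $\phi(t)$ of order larger than $\sqrt t$ pushes the window's upper edge below the tilted mean and the CLT then gives vanishing (though subexponential) mass. The clean fix is the standard one: for any $\phi(t)=o(t)$ and small $\delta>0$ the event $\{L_t\le\eta t+\phi(t)\}$ eventually contains $\{L_t\le(\eta-\delta)t\}$, so the exact-threshold lower bound at $\eta-\delta$ applies, and continuity of $\lambda^*$ on the interior of $I^*$ lets you send $\delta\to0$ after $t\to\infty$. With that repair the argument is complete.
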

In the stated version, \Cref{thm:large-deviations} follows from Theorem~2.2.3 of \citet{DeZe10a} by recalling that $\lambda^*$ is non-increasing on $(-\infty,\ev[\ell]]$ and non-decreasing on $[\ev[\ell],\infty)$, or by applying Theorem~6 of \citet{HMST21a} to $\ell$ and $-\ell$.

For $f,g\in\Omega$ with $f\neq g$ and $i\in N$, let $\lambda^i_{f,g}$ be the cumulant generating function of $\ell^i_{f,g}$, and denote by $(\lambda^i_{f,g})^*$ its Fenchel-Legendre transform.
It is not hard to show that $\lambda^i_{f,g}(z) = \lambda^i_{g,f}(-(z+1))$ and $(\lambda^i_{f,g})^*(\eta) = (\lambda^i_{g,f})^*(-\eta) - \eta$.\footnote{These relations appear in Lemma~6 of \citet{HMST21a}. Since they use different sign conventions to define $\lambda^i_{f,g}$ and $(\lambda^i_{f,g})^*$, we reproduce the argument here. 
First,
\begin{align*}
	\lambda^i_{f,g}(z) = \log \int_S e^{z \log\frac{d\mu^i_f}{d\mu^i_{g}}(s)} d\mu^i_f(s) 
	= \log \int_S \left(\frac{d\mu^i_f}{d\mu^i_{g}}(s)\right)^z d\mu^i_f(s) 
	= \log \int_S \left(\frac{d\mu^i_{g}}{d\mu^i_f}(s)\right)^{-(z+1)}d\mu^i_{g}(s) = \lambda^i_{g,f}(-(z+1))
\end{align*}
Thus,
\begin{align*}
	(\lambda_{f,g}^i)^*(\eta) = \sup_{z\in\mathbb R} \eta z - \lambda^i_{f,g}(z) = \sup_{z\in\mathbb R} \eta z - \lambda^i_{g,f}(-(z+1)) = \sup_{z\in\mathbb R} (-\eta) z - \lambda^i_{g,f}(z) - \eta = (\lambda^i_{g,f})^*(-\eta) - \eta
\end{align*}
}
Hence, by \Cref{lem:lambda-star-properties}\ref{item:lambda-star-properties-monotonicity},
\begin{align}
	(\lambda^i_{f,g})^*(-\evcond[g]{\ell^i_{g,f}}) = (\lambda^i_{g,f})^*(\evcond[g]{\ell^i_{g,f}}) + \evcond[g]{\ell^i_{g,f}} = \evcond[g]{\ell^i_{g,f}}\label{eq:lambda-star-state-swap}
\end{align}
Since $\lambda^i_{f,g}(0) = 0$, $\lambda^i_{f,g}(-1) = \lambda^i_{g,f}(0) = 0$, and since $\lambda^i_{f,g}$ is strictly convex, $\lambda^i_{f,g}$ attains its minimum on $(-1,0)$.
Then, using again that $\lambda^i_{f,g}$ is strictly convex,
\begin{align}
	(\lambda^i_{f,g})^*(0) = \sup_{z\in\mathbb R} -\lambda^i_{f,g}(z) 
	= - \min_{z\in(0,1)} \lambda^i_{f,g}(z) 
	< (\lambda^i_{f,g})'(0) 
	= \evcond[f]{\ell^i_{f,g}}\label{eq:lambda-star-0-bound}
\end{align}

\section{Single-Agent Learning}\label{sec:app-single-agent}

We recall some results for a single agent learning in autarky and extend those to more than two states. 

When there are only two states, the rate of learning a single agent can achieve in autarky is well-known.  
It essentially follows from \Cref{thm:large-deviations} and appears as Fact 1 of \citet{HMST21a}.\footnote{Note that $0\in (I^i_{f,g})^*$ by the remarks after \Cref{lem:lambda-star-properties} and the fact that $\sup \ell^i_{f,g} > 0$ and $\inf \ell^i_{f,g} < 0$.}
\begin{proposition}[\citealp{HMST21a}, Fact 1]\label{prop:autarky}
	Let $\Omega = \{f,g\}$. 
	The rate of learning of agent $i$ in autarky is $(\lambda^i_{f,g})^*(0)$.
	More precisely, the probability that agent $i$ makes a mistake in period~$t$ when choosing actions optimally based on her private signals is
	\begin{align*}
		\pr[a_t^i\neq a_\omega] = e^{-(\lambda^i_{f,g})^*(0)t + o(t)}
	\end{align*}
\end{proposition}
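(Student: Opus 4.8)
The plan is to reduce the proposition to a one-dimensional large-deviations statement about the log-likelihood-ratio random walk and then invoke \Cref{thm:large-deviations}. With two states, an agent who acts optimally on her private information compares the posterior odds of $f$ against $g$, so her choice is governed entirely by the private log-likelihood ratio $L_{f,g,t}^i = \log\frac{\pi_0(f)}{\pi_0(g)} + \sum_{r\le t}\ell^i_{f,g}(\mathfrak s_r^i)$. First I would observe that the optimal choice is a threshold rule in $L_{f,g,t}^i$: by genericity $a_f$ is uniquely optimal at beliefs concentrated on $f$, so there is a constant log-odds threshold $c_f$ above which $a_f$ is chosen, and symmetrically a constant $c_g$ below which $a_g$ is chosen. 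Hence a mistake in state $f$ is the event $\{L_{f,g,t}^i < c_f\}$ and a mistake in state $g$ the event $\{L_{f,g,t}^i > c_g\}$. The thresholds $c_f,c_g$ and the prior term depend only on $\pi_0$ and the utilities, so they are constants and are absorbed into the $o(t)$ correction; it therefore suffices to analyze the centered walk $\sum_{r\le t}\ell^i_{f,g}(\mathfrak s_r^i)$ crossing level $0$.

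Next I would handle state $f$. Under $\prcond[f]{\cdot}$ the increments $\ell^i_{f,g}(\mathfrak s_r^i)$ are i.i.d.\ with mean $\evcond[f]{\ell^i_{f,g}} > 0$, and the mistake event is a lower-tail deviation to level $\eta = 0 < \evcond[f]{\ell^i_{f,g}}$. The hypothesis $\inf_z (\lambda^i_{f,g})'(z) = \inf \ell^i_{f,g} < 0$ of \Cref{thm:large-deviations} holds because $\ell^i_{f,g}$ is non-degenerate and, by mutual absolute continuity of $\mu^i_f$ and $\mu^i_g$, takes both signs (this is the content of the footnote fact $0 \in (I^i_{f,g})^*$, via the remarks after \Cref{lem:lambda-star-properties}). \Cref{thm:large-deviations} then gives $\prcond[f]{L_{f,g,t}^i < c_f} = e^{-(\lambda^i_{f,g})^*(0)t + o(t)}$, and \eqref{eq:lambda-star-0-bound} certifies that this rate is strictly positive, so learning indeed occurs at a nonzero exponential rate.

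The remaining and most delicate step is state $g$, where the mistake probability $\prcond[g]{L_{f,g,t}^i > c_g}$ is an upper-tail deviation of the same walk but under the opposite hypothesis, and the point to verify is that it decays at the same rate $(\lambda^i_{f,g})^*(0)$. The hard part is the change-of-measure symmetry between the two states. I would compute the cumulant generating function $\Lambda$ of $\ell^i_{f,g}$ under $\prcond[g]{\cdot}$ and observe, by absorbing one power of the likelihood ratio $\frac{d\mu^i_f}{d\mu^i_g}$ to pass from $\mu^i_g$ to $\mu^i_f$, that $\Lambda(z) = \lambda^i_{f,g}(z-1)$; taking Fenchel--Legendre transforms then yields $\Lambda^*(0) = (\lambda^i_{f,g})^*(0)$, so Cramér applied under state $g$ returns the same exponent. (Equivalently, one can rewrite the event using $\ell^i_{g,f} = -\ell^i_{f,g}$ and invoke the relations behind \eqref{eq:lambda-star-state-swap}.) Finally, writing $\pr[a_t^i \neq a_\omega] = \pi_0(f)\prcond[f]{L_{f,g,t}^i < c_f} + \pi_0(g)\prcond[g]{L_{f,g,t}^i > c_g}$ and noting that both summands share the exponential rate $(\lambda^i_{f,g})^*(0)$ gives the claimed $\pr[a_t^i\neq a_\omega] = e^{-(\lambda^i_{f,g})^*(0)t + o(t)}$.
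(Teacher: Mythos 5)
Your proof is correct and takes exactly the route the paper intends: the paper does not reprove this proposition but cites \citet{HMST21a} and remarks that it ``essentially follows from'' \Cref{thm:large-deviations}, and your change-of-measure step $\Lambda(z)=\lambda^i_{f,g}(z-1)$, hence $\Lambda^*(0)=(\lambda^i_{f,g})^*(0)$, is precisely the relation recorded in the paper's appendix (the identities $\lambda^i_{f,g}(z)=\lambda^i_{g,f}(-(z+1))$ and $(\lambda^i_{f,g})^*(\eta)=(\lambda^i_{g,f})^*(-\eta)-\eta$ behind \eqref{eq:lambda-star-state-swap}). No substantive difference from the paper's argument.
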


For more than two states, the optimal rate of learning is determined by the two states that are the hardest to distinguish. 
More precisely, the optimal rate of learning equals the minimum of the optimal learning rates when restricting to any pair of states.
As for the case of two states, the ``maximum likelihood strategy'' achieves the highest learning rate: in any period, choose the action that is optimal in a most probable state.
This result can be obtained from Theorem~2.2.30 of \citet[][]{DeZe10a}.
We state it here along with a proof that is specific to our setting.
\begin{corollary}\label{cor:reduction-to-two-states}
	The probability that agent $i$ learning in autarky and choosing actions optimally makes a mistake in period~$t$ is
	\begin{align*}
		\pr[a_t^i\neq a_\omega] = e^{-\raut^i t + o(t)}
	\end{align*}
	where
	\begin{align*}
		\raut^i = \min_{f\neq g} (\lambda^i_{f,g})^*(0)
	\end{align*}
\end{corollary}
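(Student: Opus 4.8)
The plan is to reduce the $|\Omega|$-state problem to the pairwise comparisons already settled by \Cref{prop:autarky}. The first step is to pin down the mistake-minimizing strategy. Since agent $i$'s only goal is to match $a_\omega$ and the optimal actions $a_f$ are distinct across states, at any information set she maximizes $\pr[a_t^i = a_\omega \mid \mathfrak s_{\le t}^i]$ by choosing the action $a_f$ optimal in a most probable state $f$; that is, the mistake-minimizing rule is the maximum a posteriori (MAP) rule, and matching the MAP state is the same as matching $a_\omega$. Because the posterior odds between $f$ and $g$ are $e^{L_{f,g,t}^i}$, conditional on the true state being $f$ the agent errs in period $t$ precisely when $f$ is not the (unique) maximizer, i.e.\ on the event $\bigcup_{g\neq f}\{L_{f,g,t}^i \le 0\}$ (the tie-breaking convention only changes $\le$ to $<$ and does not affect exponential rates).

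For the lower bound on the rate I would bound the conditional mistake probability by a union bound over the finitely many competitors, $\prcond[f]{a_t^i \neq a_f} \le \sum_{g\neq f}\prcond[f]{L_{f,g,t}^i\le 0}$. Conditional on $f$, $L_{f,g,t}^i$ is a random walk whose increments are distributed as $\ell_{f,g}^i$, shifted by the constant $\log\frac{\pi_0(f)}{\pi_0(g)}$; its drift $\evcond[f]{\ell_{f,g}^i} = (\lambda_{f,g}^i)'(0)$ is strictly positive by \eqref{eq:lambda-star-0-bound}, and $0$ lies strictly between $\inf \ell_{f,g}^i$ and $\sup\ell_{f,g}^i$ (the former because $\evcond[f]{e^{-\ell_{f,g}^i}} = 1$ forces $\ell_{f,g}^i$ to be negative on a positive-measure set, the latter because the drift is positive). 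Hence \Cref{thm:large-deviations}, applied with $\eta = 0$ and absorbing the constant shift into the $o(t)$ term, gives $\prcond[f]{L_{f,g,t}^i\le 0} = e^{-(\lambda_{f,g}^i)^*(0) t + o(t)}$. A finite sum of such terms decays at the slowest rate, so $\prcond[f]{a_t^i\neq a_f} \le e^{-\min_{g\neq f}(\lambda_{f,g}^i)^*(0)\, t + o(t)}$.

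For the matching upper bound on the rate I would keep only the single hardest competitor. Letting $g^\ast \in \arg\min_{g\neq f}(\lambda_{f,g}^i)^*(0)$, the inclusion $\{L_{f,g^\ast,t}^i\le 0\}\subseteq\bigcup_{g\neq f}\{L_{f,g,t}^i\le 0\}$ together with the same application of \Cref{thm:large-deviations} yields $\prcond[f]{a_t^i\neq a_f}\ge \prcond[f]{L_{f,g^\ast,t}^i\le 0} = e^{-\min_{g\neq f}(\lambda_{f,g}^i)^*(0)\, t + o(t)}$. The two bounds show that, conditional on $f$, the mistake probability decays at exactly the rate $\min_{g\neq f}(\lambda_{f,g}^i)^*(0)$. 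Finally, since $\pr[a_t^i\neq a_\omega] = \sum_{f}\pi_0(f)\prcond[f]{a_t^i\neq a_f}$, the prior has full support, and $\Omega$ is finite, the unconditional mistake probability inherits the slowest of these conditional rates, giving $\pr[a_t^i\neq a_\omega] = e^{-\raut^i t + o(t)}$ with $\raut^i = \min_f\min_{g\neq f}(\lambda_{f,g}^i)^*(0) = \min_{f\neq g}(\lambda_{f,g}^i)^*(0)$, as claimed.

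The step I expect to be most delicate is the clean identification of the conditional mistake event with the family of one-dimensional crossing events $\{L_{f,g,t}^i \le 0\}$: it requires justifying that the MAP rule is the mistake-minimizing strategy (which is where the distinctness and uniqueness of the optimal actions $a_f$ enter, so that matching the MAP state coincides with matching $a_\omega$), and checking that ties in the posterior---possible with positive probability for discrete signals---leave the exponential rate untouched, since $\prcond[f]{L_{f,g,t}^i < 0}$ and $\prcond[f]{L_{f,g,t}^i\le 0}$ share the rate $(\lambda_{f,g}^i)^*(0)$. Once this reduction is in hand, the union bound and the single-competitor bound sandwich the rate with only bookkeeping left, and no large-deviations input beyond \Cref{thm:large-deviations} is needed; in particular, the multi-state character of the problem never forces us to analyze how two \emph{wrong} states compete with each other, because any state overtaking the true one already counts as a mistake.
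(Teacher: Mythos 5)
Your proposal is correct, and the achievability half (union bound over competitors plus \Cref{thm:large-deviations} for the maximum-likelihood/MAP rule) is exactly the paper's Step 2. Where you genuinely diverge is in the converse direction, showing no strategy beats $\raut^i$. The paper argues by contradiction: a strategy whose rate exceeds $\raut^i$ would, after restricting the state space to the minimizing pair $\{f,g\}$, beat the two-state optimum, contradicting \Cref{prop:autarky}; this requires no characterization of the optimal strategy and works for arbitrary $\sigma^i$ directly. You instead identify the pointwise mistake-minimizing rule as the MAP rule (valid here precisely because in autarky the agent's information does not depend on her past actions, so each period can be optimized separately, and because the $a_f$ are distinct so matching the MAP state is equivalent to matching $a_\omega$), and then lower-bound its mistake probability via the single-competitor inclusion $\{L^i_{f,g^\ast,t}\le 0\}\subseteq\{a_t^i\neq a_f\}$ and Cram\'er's lower bound. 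Your route is more self-contained---it does not invoke \Cref{prop:autarky} as a black box---but it carries the extra obligations you correctly flag: justifying that MAP minimizes mistakes among all strategies, and checking that the strict-versus-weak inequality at ties and the constant prior shift do not affect the exponential rate (both handled by the continuity and monotonicity of $(\lambda^i_{f,g})^*$ near $0$ and the fact that $\inf\ell^i_{f,g}<0<\sup\ell^i_{f,g}$). The paper's reduction buys brevity and uniformity over strategies; yours buys an explicit identification of the optimal rule and a direct two-sided sandwich. Both are sound.
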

\begin{proof}
	Let 
	\begin{align*}
		r = \sup_{\sigma^i} \liminf_{t\rightarrow\infty} -\frac1t\log \pr[a_t^i \neq a_\omega]
	\end{align*}
	where $a_t^i = \sigma_t^i(\mathfrak s_1^i,\dots,\mathfrak s_t^i)$ and the supremum is taken over all strategies $\sigma^i = (\sigma_1^i,\sigma_2^i,\dots)$ of agent $i$ in autarky.
	Hence, $r$ is the optimal rate of learning, and we have to show that $r = \raut^i$.
	\setcounter{step}{0}
	\begin{step}[$r \le \raut^i$]
		Assume for contradiction that $r > \raut^i$, and let $\sigma^i$ be a strategy such that
		\begin{align*}
			\liminf_{t\rightarrow\infty} -\frac1t\log \pr[a_t^i \neq a_\omega] > \raut^i
		\end{align*}
		Then, there are $f,g\in\Omega$ with $f\neq g$ such that $\liminf_{t\rightarrow\infty} -\frac1t\log \pr[a_t^i \neq a_\omega] > (\lambda_{g,f}^i)^*(0)$.
		Note that $\sigma^i$ is also a strategy for the problem after restricting to the subset $\{f,g\}$ of states, and as such achieves a rate of learning of
		\begin{align*}
			\liminf_{t\rightarrow\infty} -\frac1t\log \pr[a_t^i \neq a_\omega \mid \omega \in\{f,g\}] \ge \liminf_{t\rightarrow\infty} -\frac1t\log \pr[a_t^i \neq a_\omega] > (\lambda_{g,f}^i)^*(0)
		\end{align*} 
		where the first inequality follows from the fact that $\pr[a_t^i \neq a_\omega \mid \omega \in\{f,g\}] \pr[\omega\in\{f,g\}] \le \pr[a_t^i\neq a_\omega]$ and $\pr[\omega\in\{f,g\}] > 0$.
		But this contradicts \Cref{prop:autarky}. 
	\end{step}
	\begin{step}[$r \ge \raut^i$]
		It suffices to find a strategy that achieves a learning rate of at least $\raut^i$.
		Let $\sigma^i = (\sigma_1^i,\sigma_2^i,\dots)$ be a strategy with 
		\begin{align*}
			a_t^i = \sigma_t^i(\mathfrak s_1^i,\dots,\mathfrak s_t^i) \in \{a_f\in A\colon f\in\Omega, L_{f,g,t}^{i} \ge 0 \;\forall g\in\Omega\}
		\end{align*}
		for all $t\in T$.
		Thus, $\sigma^i$ is a ``maximum-likelihood strategy'', i.e., it chooses the optimal action for a most probable state.\footnote{Note that $\sigma^i$ is well-defined since $L_{f,g,t}^{i} + L_{g,g',t}^{i} = L_{f,g',t}^{i}$ ensuring that the right-hand side in the definition of $\sigma^i$ is always non-empty.}
		Then, for all $f\in\Omega$,
		\begin{align*}
			\prcond[f]{a_t^i \neq a_f} \le \sum_{g\in\Omega} \prcond[f]{L_{f,g,t}^{i} \le 0} = \sum_{g\in\Omega} e^{-(\lambda_{f,g}^i)^*(0)t + o(t)} \le e^{-\raut^i + o(t)}
		\end{align*}
		where the first inequality follows from the definition of $\sigma^i$, the equality follows from \Cref{thm:large-deviations}, and the second inequality uses the definition of $\raut$ (and may require adjusting the lower-order term $o(t)$).
		Hence,
		\begin{align*}
			\pr[a_t^i\neq a_\omega] = \sum_{f\in\Omega} \prcond[f]{a_t^i\neq a_f} \pr[\omega = f] \le e^{-\raut^i + o(t)}
		\end{align*}
		as required.
	\end{step}
\end{proof}

\section{Omitted Proofs From \Cref{sec:coordinated-learning}}\label{sec:coordinated-learning-appendix}

In this section, we prove the upper and lower bound on the optimal learning rate claimed in \Cref{thm:main} and \Cref{thm:coordination}.

We start with three auxiliary lemmas.
Heuristically, the first states the following.
Fix a (one-dimensional) random walk with i.i.d.\ increments, an upper (lower) slope larger (smaller) than the expectation of each increment, and a probability threshold smaller than $1$.
Then, for each period $t_0$, the probability that the random walk remains inside an affine wedge with the given upper and lower slopes and sufficiently large intercepts in all periods after $t_0$ conditional on being well inside the wedge at $t_0$ exceeds the threshold.
\begin{lemma}\label{lem:wedge-probability}
	Let $\ell_1,\ell_2,\dots$ be non-degenerate i.i.d.\ random variables such that $\ev[e^{z\ell_1}]$ is finite for each $z\in\mathbb R$ and let $a^+,a^- \in \mathbb R$ with $a^+ > \ev[\ell_1]$ and $a^- < \ev[\ell_1]$.
	For $t \in T$, let $L_t = \sum_{r \le t} \ell_r$, and let $L = (L_1,L_2,\dots)$.
	Then, for each $\delta > 0$, there is $K > 0$ such that for each $b\in\mathbb R$, each $t_0 \in T$, and almost every (with respect to the distribution of $(L_1,\dots,L_{t_0})$) $x_{\le t_0} \in \mathbb R^{t_0}$ with $b-\frac{K}2 \le x_{t_0} \le b + \frac{K}{2}$,
	\begin{align*}
		\pr[L\in\mathcal W\mid L_{\le t_0} = x_{\le t_0}] \ge 1-\delta
	\end{align*}
	where $\mathcal W = \{x\in\mathbb R^T\colon \forall t\ge t_0, b-K + a^-(t-t_0) \le x_t \le b+K + a^+(t-t_0)\}$.
\end{lemma}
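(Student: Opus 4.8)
The plan is to condition on the trajectory up to $t_0$ and exploit the Markov property of the random walk, reducing the statement to a single large-deviation tail estimate for the increments after $t_0$ that is uniform in $b$ and $t_0$. Fix $x_{\le t_0}$ with $b - \tfrac{K}{2} \le x_{t_0} \le b + \tfrac{K}{2}$. Since the increments are i.i.d.\ and independent of the past, conditional on $L_{\le t_0} = x_{\le t_0}$ the process $(L_{t_0 + s})_{s \ge 1}$ has the same law as $(x_{t_0} + M_s)_{s\ge 1}$, where $M_s = \sum_{r=1}^{s} \ell_{t_0 + r}$ is a fresh copy of the walk; in particular the conditional probability depends on $x_{\le t_0}$ only through $x_{t_0}$, which disposes of the ``almost every'' qualifier. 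Writing $t = t_0 + s$, membership of $L$ in $\mathcal W$ is exactly the requirement that for every $s\ge 1$,
\begin{align*}
	(b - x_{t_0}) - K + a^- s \le M_s \le (b - x_{t_0}) + K + a^+ s.
\end{align*}
Because $|b - x_{t_0}| \le \tfrac{K}{2}$, the lower threshold is at most $-\tfrac{K}{2} + a^- s$ and the upper threshold is at least $\tfrac{K}{2} + a^+ s$, so it suffices to establish the position-independent bound $\pr[\forall s\ge 1:\ -\tfrac{K}{2} + a^- s \le M_s \le \tfrac{K}{2} + a^+ s] \ge 1 - \delta$, which then holds uniformly in $b$ and $t_0$ since $M$ depends on neither.

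I would bound the complementary event by a union bound over $s$ and control each term by an exponential (Chernoff) inequality. Writing $\lambda(z) = \log\ev[e^{z\ell_1}]$, which is finite for every $z$ by hypothesis, for any $z^- < 0$ and any $s$,
\begin{align*}
	\pr[M_s \le -\tfrac{K}{2} + a^- s] \le e^{-z^-(-K/2 + a^- s)}\, e^{s\lambda(z^-)} = e^{z^- K/2}\, e^{s(\lambda(z^-) - z^- a^-)},
\end{align*}
and symmetrically $\pr[M_s \ge \tfrac{K}{2} + a^+ s] \le e^{-z^+ K/2}\, e^{s(\lambda(z^+) - z^+ a^+)}$ for any $z^+ > 0$. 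The crucial point is the choice of tilting parameters. Since $h^-(z) = \lambda(z) - z a^-$ satisfies $h^-(0) = 0$ and $(h^-)'(0) = \ev[\ell_1] - a^- > 0$ (using $a^- < \ev[\ell_1]$ together with \Cref{lem:lambda-star-properties}\ref{item:lambda-differentiable}), one can pick $z^- < 0$ with $c^- \coloneqq z^- a^- - \lambda(z^-) > 0$; likewise $a^+ > \ev[\ell_1]$ lets one pick $z^+ > 0$ with $c^+ \coloneqq z^+ a^+ - \lambda(z^+) > 0$. Each summand is then geometric in $s$, with a prefactor ($e^{z^- K/2}$ or $e^{-z^+ K/2}$) that is independent of $s$.

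Summing the two geometric series over $s \ge 1$ yields
\begin{align*}
	\pr[\exists s\ge 1:\ M_s < -\tfrac{K}{2} + a^- s \ \text{or}\ M_s > \tfrac{K}{2} + a^+ s] \le e^{z^- K/2}\,\frac{e^{-c^-}}{1 - e^{-c^-}} + e^{-z^+ K/2}\,\frac{e^{-c^+}}{1 - e^{-c^+}}.
\end{align*}
Here $z^\pm$, $c^\pm$, and the geometric sums depend only on the law of $\ell_1$ and the slopes $a^\pm$, while the two prefactors tend to $0$ as $K \to \infty$ because $z^- < 0 < z^+$. Hence one can fix $K$, depending only on $\delta$, the increment distribution, and $a^\pm$ (but not on $b$ or $t_0$), so that the right-hand side is at most $\delta$, completing the proof.

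The computation is routine large-deviations bookkeeping; the only genuinely delicate point is securing \emph{uniformity in $b$ and $t_0$}, and this is precisely what the $\tfrac{K}{2}$-slack in the hypothesis provides. It lets one replace the $x_{t_0}$-dependent thresholds by the worst-case position-independent thresholds $\mp\tfrac{K}{2}$, after which the Markov property renders the tail estimates completely blind to $b$ and $t_0$.
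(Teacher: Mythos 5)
Your proof is correct, but it takes a genuinely different route from the paper's. The paper's argument is two-stage: it first replaces $a^{\pm}$ by the midpoints $\tfrac12(\ev[\ell_1]+a^{\pm})$ so that Cram\'er's theorem (\Cref{thm:large-deviations}) applies, uses the resulting exponential decay to find a finite horizon $t_1$ beyond which the recentered walk stays between the slopes $a^-$ and $a^+$ with probability at least $1-\tfrac{\delta}{2}$, and then chooses $K$ large only to control the finitely many periods between $t_0$ and $t_1$. You instead prove a single uniform bound for every $s\ge 1$: after using the $\tfrac{K}{2}$-slack to pass to the worst-case position-independent thresholds $\mp\tfrac{K}{2}+a^{\pm}s$, you apply an exponential Chernoff bound with tilting parameters $z^-<0<z^+$ chosen (via $\lambda'(0)=\ev[\ell_1]$ and the strict inequalities $a^-<\ev[\ell_1]<a^+$) so that each summand is geometric in $s$, and you let the prefactors $e^{z^-K/2}$ and $e^{-z^+K/2}$ absorb the role of $K$. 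This buys an explicit failure probability that decays exponentially in $K$, avoids both the early/late split and the preliminary adjustment of the slopes, and makes the uniformity in $b$ and $t_0$ completely transparent; the paper's version trades that for being able to cite \Cref{thm:large-deviations} as a black box. Both arguments dispose of the ``almost every'' qualifier identically, by noting that the conditional law of the future given $L_{\le t_0}=x_{\le t_0}$ depends only on $x_{t_0}$ by independence of the increments. (The only point you leave implicit is the period $t=t_0$ itself, which is in the defining condition of $\mathcal W$ but holds automatically since $|x_{t_0}-b|\le\tfrac{K}{2}\le K$.)
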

\begin{proof}
	Denote by $\lambda$ the cumulant generating function of $\ell_1$ and denote by $\lambda^*$ its Fenchel-Legendre transform.
	First, observe that the statement becomes stronger if $a^+$ and $a^-$ are closer to $\ev[\ell_1]$. 
	Hence, we may and will assume that $\ev[\ell_1] < a^+ < \sup_{z\in\mathbb R} \lambda'(z)$ and $\inf_{z\in\mathbb R} \lambda'(z) < a^- < \ev[\ell_1]$, so that \Cref{thm:large-deviations} applies.
	
	Fix $\delta > 0$.
	Let $b\in\mathbb R$, $t_0\in T$, and $x_{\le t_0}\in \mathbb R^{t_0}$ with $b-\frac{K}{2} \le x_{t_0} \le b+\frac{K}{2}$.
	Let $\tilde a^+ = \frac12(\ev[\ell_1] + a^+)$ and $\tilde a^- = \frac12(\ev[\ell_1] + a^-)$.
	By \Cref{lem:lambda-star-properties}\ref{item:lambda-star-properties-monotonicity}, $\lambda^*(\tilde a^+) > 0$ and $\lambda^*(\tilde a^-) > 0$.
	Thus, for all $t \ge t_0$, we have by \Cref{thm:large-deviations} that
	\begin{align*}
		&\pr[L_t - L_{t_0}\ge \tilde a^+(t-t_0) + o(t-t_0)] = e^{-\lambda^*(\tilde a^+)(t-t_0) + o(t-t_0)}\text{, and}\\
		&\pr[L_t - L_{t_0}\le \tilde a^-(t-t_0) + o(t-t_0)] = e^{-\lambda^*(\tilde a^-)(t-t_0) + o(t-t_0)}
	\end{align*}
	Thus, there is $t_1 \ge t_0$ such that
	\begin{align*}
		\pr[\forall t \ge t_1,  a^-(t-t_0) \le L_{t} - L_{t_0} \le a^+(t-t_0)] \ge 1-\frac{\delta}{2}
	\end{align*} 
	Let $K>0$ such that $\pr[-\frac{K}{2} \le \ell_1 \le \frac{K}{2}] \ge \frac{\delta}{2(t_1 - t_0)}$.
	Then it follows that 
	\begin{align*}
		\pr[\forall t \ge t_0, -\frac{K}{2} + a^-(t-t_0) \le L_{t} - L_{t_0} \le \frac{K}{2} + a^+(t-t_0)] \ge 1-\delta
	\end{align*} 
	But $\ell_1,\ell_2,\dots$ are independent and $b-\frac{K}{2} \le x_{t_0} \le b + \frac{K}{2}$ and so
	\begin{align*}
		\pr[L \in \mathcal W \mid L_{\le t_0} = x_{\le t_0}] 
		&\geq \pr[\forall t \ge t_0, - \frac{K}{2} + a^-(t-t_0) \le L_{t} - L_{t_0} \le \frac{K}{2} + a^+(t-t_0)] \ge 1-\delta
	\end{align*}
	which finishes the proof.
\end{proof}

The second lemma states the following.
Assume the agents follow strategies for which each of them learns at a strictly positive rate $r$. 
Then, there is a history $H_{\le t_0}$ up to $t_0$ such that conditional on $H_{\le t_0}$, each agent's probability of a mistake decreases exponentially at a rate close to $r$ from $t_0$ onward.
Heuristically, this turns the limit defining the learning rate into a statement about each sufficiently late period at a small cost in the learning rate. 
\begin{lemma}
	\label{lem:Ht0}
	Fix the number of agents $n$, a state $g\in\Omega$, and a learning rate $r > 0$.
	Let $\sigma^1,\dots,\sigma^n$ be strategies such that for each $i \in N$,
	\begin{align*}
		\liminf_{t\rightarrow\infty} -\frac1t \log \pr[a_t^i \neq a_g] \ge r
	\end{align*}
	Then, for each $\epsilon > 0$, there are $t_0 \in T$ and $H_{\le t_0}\in A^{N\times t_0}$ such that $\prcond[g]{H_{\le t_0}} > 0$ and for each $i \in N$ and $t > t_0$,
	\begin{align*}
		\prcond[g]{a_t^i \neq a_g \mid H_{\le t_0}} \le e^{-(r-\epsilon)t}
	\end{align*}
	Moreover, $t_0$ may be chosen arbitrarily large.
\end{lemma}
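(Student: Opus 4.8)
The plan is to turn the bound on the \emph{unconditional} learning rate into an estimate that holds simultaneously in every late period, conditional on a single history. The engine is an averaging (tower-property) argument: for fixed $i$ and $t > t_0$, the mistake probability $\prcond[g]{a_t^i \neq a_g}$ is the average over histories up to $t_0$ of the conditional mistake probability $\prcond[g]{a_t^i \neq a_g \mid H_{\le t_0}}$. If the unconditional quantity is exponentially small, then by Markov's inequality only an exponentially small fraction of histories can fail to have an equally small conditional mistake probability; a union bound over the finitely many agents and over all future periods $t > t_0$ then leaves a set of positive probability on which the desired estimate holds for every $i$ and every $t > t_0$ at once.

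In detail, I would first reduce to conditioning on $g$: since $\pi_0$ has full support, $\pr[a_t^i \neq a_g] \ge \pi_0(g)\,\prcond[g]{a_t^i \neq a_g}$, so the hypothesis gives $\liminf_{t\to\infty} -\frac1t\log\prcond[g]{a_t^i\neq a_g} \ge r$ for each $i$. Fixing $\epsilon > 0$, there is $T_0$ with $\prcond[g]{a_t^i\neq a_g} \le e^{-(r-\epsilon/2)t}$ for all $t \ge T_0$ and all $i \in N$ (taking the maximum over the finitely many agents). For $t_0 \ge T_0$ and each $i$, $t > t_0$, let $p_{i,t}$ denote the regular conditional probability of $\{a_t^i \neq a_g\}$ given the action history up to $t_0$, viewed as a function of that history; by the tower property $\evcond[g]{p_{i,t}} = \prcond[g]{a_t^i\neq a_g} \le e^{-(r-\epsilon/2)t}$. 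Markov's inequality then yields
\[ \prcond[g]{p_{i,t} > e^{-(r-\epsilon)t}} \le e^{(r-\epsilon)t}\,\evcond[g]{p_{i,t}} \le e^{-(\epsilon/2)t}, \]
and summing over $i \in N$ and $t > t_0$ bounds the probability that some estimate fails by $n\sum_{t>t_0} e^{-(\epsilon/2)t}$. This tail tends to $0$ as $t_0 \to \infty$, so choosing $t_0$ large enough (and arbitrarily large, as required) makes it strictly less than $1$; the complementary ``good'' set of histories, on which $p_{i,t} \le e^{-(r-\epsilon)t}$ holds simultaneously for all $i$ and all $t > t_0$, therefore has positive probability.

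The one delicate point, which I expect to be the main obstacle, is the requirement that $\prcond[g]{H_{\le t_0}} > 0$ for a \emph{single} history: because $A$ may be infinite and the signals continuous, the distribution of the action history up to $t_0$ can be atomless, so that every individual history has probability zero while the argument above exhibits only a good set of positive measure. Resolving this requires working throughout with regular conditional probabilities and then extracting a genuine positive-probability conditioning event. A natural route is to observe that a positive learning rate forces the action $a_g$ to carry positive mass in state $g$ in every sufficiently late period, and to select the good history among those whose late-period play concentrates on $a_g$; making this extraction rigorous---rather than the probabilistic estimate, which is routine---is the crux, and it is exactly what is needed downstream to pass to a positive-probability set of signal trajectories all consistent with a single $H_{\le t_0}$.
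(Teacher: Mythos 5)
Your proposal is correct in substance and rests on the same engine as the paper's proof: an averaging argument over histories up to $t_0$. The implementations differ in one structural respect. You apply Markov's inequality to the conditional mistake probability separately for each pair $(i,t)$ and then take a union bound directly at the target time $t_0$, obtaining a set of histories of measure at least $1 - n\sum_{t>t_0}e^{-\epsilon t/2}$ on which all the bounds hold simultaneously. The paper instead makes a single pigeonhole step at a fixed, moderately large $\tilde t_0$ — it bounds $\sum_{i\in N,\,t>\tilde t_0}\prcond[g]{a_t^i\neq a_g}$ by $\tfrac12$ and extracts one positive-probability history $H^*_{\le\tilde t_0}$ whose conditional sum is at most $\tfrac12$ — and then reaches an arbitrarily large $t_0$ by appending the all-$a_g$ continuation, using that this continuation retains at least half of the conditional mass, so that $\prcond[g]{H_{\le t_0}}>0$ and each conditional mistake probability inflates by at most a factor of $2$. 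That extension trick is what the paper buys over your route: positivity of the conditioning event at every later $t_0$ is inherited from the single atom found at $\tilde t_0$, whereas your direct union bound must locate an atom among length-$t_0$ histories anew for each large $t_0$, and the atomic mass of the history distribution is non-increasing in $t_0$, so "good set has measure close to $1$" does not by itself guarantee it contains an atom.

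The "delicate point" you flag — that with an infinite action set and continuous signals the law of the action history may be atomless, so that no single $H_{\le t_0}$ has positive probability — is a genuine issue, but it is not one the paper resolves either: the paper's proof writes the law of total probability as a sum over histories with $\prcond[g]{H_{\le\tilde t_0}}>0$, which tacitly presupposes that the atoms carry full (or at least sufficient) mass. Your sketched fix (late-period actions must place an atom at $a_g$) handles late periods but not period $1$, where the learning-rate hypothesis imposes no constraint; a complete repair would condition on positive-measure sets of histories (or of signal trajectories, as Step 1 of the proof of Theorem 1 effectively does) rather than on a single history. In short: same approach, a slightly different and slightly less robust way of reaching arbitrarily large $t_0$, and an honestly acknowledged subtlety that the paper itself passes over in silence.
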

\begin{proof}
	By assumption, for each $i \in N$ and each $t \in T$,
	\begin{align*}
		\prcond[g]{ a_t^i \neq a_g } \le e^{-rt + o(t)}
	\end{align*}
	First, for $\tilde t_0\in T$ large enough,
	\begin{align*}
		\sum_{i\in N, t > \tilde t_0} \prcond[g]{a_t^i \neq a_g} \le \frac12
	\end{align*}
	and
	\begin{align*}
		\sum_{i\in N, t > \tilde t_0} \prcond[g]{a_t^i\neq a_g} = \sum_{H_{\le \tilde t_0} \in A^{N \times \tilde t_0}, \prcond[g]{H_{\le\tilde t_0}} > 0} \prcond[g]{H_{\le \tilde t_0}} \sum_{i\in N, t > \tilde t_0} \prcond[g]{a_t^i \neq a_g \mid H_{\le \tilde t_0}}
	\end{align*}
	and, thus, there is $H_{\le\tilde t_0}^*\in A^{N\times\tilde t_0}$ such that $\prcond[g]{H_{\le\tilde t_0}^*} > 0$ and
	\begin{align}
		\sum_{i\in N, t > \tilde t_0} \prcond[g]{a_t^i \neq a_g\mid H_{\le\tilde t_0}^*} \leq \frac12
		\label{eq:Ht01}	
	\end{align}
	Let $H\in A^{N\times T}$ such that $H_{\le\tilde t_0} = H_{\le\tilde t_0}^*$ and for each $i \in N$ and each $t > \tilde t_0$, $H_t^i = a_g$.
	Note that, by~\eqref{eq:Ht01},
	\begin{align}
		\prcond[g]{H \mid H_{\le \tilde t_0}} \ge \frac12	
		\label{eq:Ht02}
	\end{align}
	By the assumption of the lemma, for each $i\in N$, each $t_0 \geq \tilde t_0$ large enough, and each $t > t_0$,
	\begin{align*}
		\prcond[g]{a_t^i \neq a_g \mid H_{\le\tilde t_0}} \le \frac12 e^{-(r-\epsilon)t}
	\end{align*}
	and, thus, by~\eqref{eq:Ht02},
	\begin{align*}
		\prcond[g]{a_t^i\neq a_g \mid H_{\le t_0}} \le 2 \prcond[g]{a_t^i\neq a_g \mid H_{\le t_0}} \prcond[g]{H_{\le t_0} \mid H_{\tilde t_0}} \le 2\prcond[g]{ a_t^i \neq a_g \mid H_{\le\tilde t_0} } \le e^{-(r-\epsilon)t}
	\end{align*}
	which proves the claim.
\end{proof}

It is well-known that if $\mu_t,\nu_t$, $t\in\{1,2\}$, are probability measures such that $\nu_t$ is absolutely continuous with respect to $\mu_t$, then $\nu_1\otimes \nu_2$ is absolutely continuous with respect to $\mu_1\otimes \mu_2$, where $\nu_1\otimes \nu_2$ is the product measure on the corresponding product space.\footnote{A proof of this fact can be found here: \texttt{https://math.stackexchange.com/q/1042323}.}
Thus, since the distributions of signal profiles in different states are mutually absolutely continuous and signals are conditionally independent across periods, the same is true for the distributions of signal profile trajectories up to any period $t_0$.
This is the content of the third lemma.
We omit the easy proof.
\begin{lemma}
	\label{lem:absolute-continuity}
	For each $t \in T$ and each $f\in\Omega$, denote by $\mu_f^{\otimes t}$ the $t$-fold product measure of $\mu_f$ on $S^{N\times t}$.
	Then, for each $t\in T$ and each $f,g\in\Omega$, $\mu_{f}^{\otimes t}$ and $\mu_{g}^{\otimes t}$ are mutually absolutely continuous.
\end{lemma}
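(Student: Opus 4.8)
The plan is to prove the statement by induction on $t$, leveraging the two-factor product fact recalled immediately before the lemma together with the model assumption that $\mu_f$ and $\mu_g$ are mutually absolutely continuous on $S^N$. Recall that mutual absolute continuity of $\mu_f^{\otimes t}$ and $\mu_g^{\otimes t}$ means both $\mu_f^{\otimes t} \ll \mu_g^{\otimes t}$ and $\mu_g^{\otimes t} \ll \mu_f^{\otimes t}$, so it suffices to establish each one-sided absolute continuity and then interchange the roles of $f$ and $g$.

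For the base case $t = 1$, the claim is exactly the model assumption that $\mu_f$ and $\mu_g$ are mutually absolutely continuous. For the inductive step, suppose $\mu_f^{\otimes t}$ and $\mu_g^{\otimes t}$ are mutually absolutely continuous on $S^{N \times t}$. Because signals are conditionally independent across periods, the joint law of $(\mathfrak s_1,\dots,\mathfrak s_{t+1})$ in state $f$ factors as $\mu_f^{\otimes (t+1)} = \mu_f^{\otimes t} \otimes \mu_f$, and likewise for $g$. Applying the two-factor fact with first factors $\mu_f^{\otimes t}, \mu_g^{\otimes t}$ (mutually absolutely continuous by the inductive hypothesis) and second factors $\mu_f, \mu_g$ (mutually absolutely continuous by assumption) yields $\mu_g^{\otimes (t+1)} \ll \mu_f^{\otimes (t+1)}$; exchanging $f$ and $g$ gives the reverse containment, completing the induction.

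There is no substantive obstacle here, which is why the paper labels the proof easy. The only points requiring a modicum of care are the following. First, the two-factor fact as stated delivers one-directional absolute continuity, so it must be invoked twice, once in each direction, to obtain the mutual statement. Second, one must record explicitly that conditional independence across periods is precisely what turns the joint law of $(\mathfrak s_1,\dots,\mathfrak s_t)$ into the product $\mu_f^{\otimes t}$, so that the product fact is even applicable. Both are immediate, so the induction goes through without difficulty, and the lemma follows.
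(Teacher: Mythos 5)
Your proof is correct and follows exactly the route the paper intends: the paper omits the proof but cites precisely the two-factor product fact and the conditional independence across periods, and your induction on $t$ is the obvious way to assemble these into the full statement. Nothing is missing.
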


Now we are ready for the proof of \Cref{thm:main}.

\main*
\begin{proof}
	For each $g\in\Omega$, we denote by $\prcond[g]{\cdot\mid \mathfrak s = \cdot}$ a regular conditional probability, and write $\prcond[g]{\cdot \mid \mathfrak s = \cdot} = \prcond[g]{\cdot \mid \cdot}$ again for convenience.\footnote{That is, (i) for each $s\in S^T$, $\prcond[g]{\cdot\mid \mathfrak s = s}$ is a probability measure on the underlying probability space, (ii) for each event $E$, $\prcond[g]{E\mid \mathfrak s = \cdot}$ is measurable, and (iii) for each event $E$ and each measurable set $\mathcal S\subset S^{N\times T}$, $\prcond[g]{E \cap \mathfrak s^{-1}(\mathcal S)} = \int_{\mathcal S} \prcond[g]{E\mid \mathfrak s = s} d\pr{}\circ \mathfrak s^{-1}(s)$.}
	Let $f,g\in\Omega$ such that $\rbdd = \max_{i\in N} \evcond[f]{\ell_{f,g}^i}$.
	Fix strategies $\sigma^1,\dots,\sigma^n$.
	It is clearly without loss of generality to assume that $N^i = N$ for each $i\in N$.
	Assume for contradiction that there is $\epsilon > 0$ such that for each $i\in N$ and each $t\in T$,
	\begin{align}
		\pr[a_t^i\neq a_\omega] \le e^{-(\rbdd + 6\epsilon)t + o(t)}
		\label{eq:too-fast-learning}
	\end{align}
	We write $r = \rbdd$ in the rest of the proof for convenience.
	
	As promised by \Cref{lem:Ht0}, there are $t_0 \in T$ and $H \in A^{N\times T}$ such that 
	\begin{enumerate}[label=(\roman*)]
		\item $\prcond[g]{H_{\le t_0}} > 0$,\label{item:ht01}
		\item for each $i \in N$ and each $t > t_0$, $\prcond[g]{a_t^i \neq a_g \mid H_{\le t_0}} \le e^{-(r + 5\epsilon)t}$, \label{item:ht02}
		\item $\sum_{t > t_0} e^{-\epsilon t} < \frac{1}{32n^2}$ and $\epsilon t_0 \ge K$, and \label{item:ht03}
		\item for each $i \in N$ and each $t > t_0$, $H_t^i = a_g$, \label{item:ht04}
	\end{enumerate}
	where $K$ is the constant obtained from applying \Cref{lem:wedge-probability} to i.i.d.\ random variables $\ell_1,\ell_2,\dots$ each with distribution $\prcond[f]{\ell_{f,g}^i\in \cdot}$, $a^+ = \evcond[f]{\ell_{f,g}^i} + \epsilon$, $a^- = \evcond[f]{\ell_{f,g}^i} - \epsilon$, and $\delta = \frac{1}{2n}$ for each $i\in N$ separately and taking the maximum.\footnote{Here, $\prcond[f]{\ell_{f,g}^i\in \cdot}$ denotes the distribution of $\ell_{f,g}^i$ conditional on state $f$.}

	We proceed in multiple steps.
	\setcounter{step}{0}
	
	\begin{step}\label{step:correlated1}
		For each $i\in N$, each $t\in T$, and each $s_{\le t}^i \in S^t$, denote by $L_{g,f}^i(s_{\le t}^i)$ agent $i$'s log-likelihood ratio for $g$ over $f$ after observing the private signals $s_{\le t}^i$:
		\begin{align*}
			L_{g,f}^i(s_{\le t}^i) = \log\frac{\pi_0(g)}{\pi_0(f)} + \sum_{t' \le t} \ell_{g,f}^i(s_{t'}^i)
		\end{align*}
		
		First, we show that there is a set of signal profile trajectories $\mathcal S_{\le t_0} \subset S^{N\times t_0}$ up to $t_0$ and $m^1,\dots,m^n\in\mathbb R$ such that (i) $\mathcal S_{\le t_0} \subset \mathcal S(H_{\le t_0})$ (i.e., each trajectory of profiles in $\mathcal S_{\le t_0}$ induces $H_{\le t_0}$), (ii) $\prcond[g]{\mathcal S_{\le t_0}} > 0$, (iii) for each $i \in N$, each $t > t_0$, and almost every $s_{\le t_0}\in\mathcal S_{\le t_0}$, $\prcond[g]{a_t^i \neq a_g \mid  s_{\le t_0} } \le e^{-(r + 4\epsilon)t}$, and (iv) for each $i\in N$ and each $s_{\le t_0}\in\mathcal S_{\le t_0}$, $(m^i - \frac\epsilon2)t_0 \le L_{g,f}^i(s_{\le t_0}^i) \le (m^i+\frac\epsilon2)t_0$ (i.e., all trajectories in $\mathcal S_{\le t_0}$ induce roughly the same log-likelihood ratios).
		
		For each $i \in N$ and each $t > t_0$, let $\mathcal S_{\le t_0}(i,t) \subset \mathcal S(H_{\le t_0})$ be the set of signal profile trajectories up to $t_0$ that induce $H_{\le t_0}$ and after which the probability that agent $i$ does not choose $a_g$ in period $t$ and state $g$ is at least $e^{-(r + 4\epsilon)t}$.
		Formally, for each $i \in N$ and each $t > t_0$, let $\mathcal S_{\le t_0}(i,t)\subset \mathcal S(H_{\le t_0})$ such that for almost every $s_{\le t_0}\in\mathcal S_{\le t_0}(i,t)$,
		\begin{align*}
			\prcond[g]{a_t^i \neq a_g \mid  s_{\le t_0}} \ge e^{-(r + 4\epsilon)t}
		\end{align*}
		and for almost every $s_{\le t_0}\not\in\mathcal S_{\le t_0}(i,t)$, the reverse inequality holds.
		Then, for each $i \in N$ and each $t > t_0$,
		\begin{align*}
			e^{-(r + 4\epsilon)t} \prcond[g]{\mathcal S_{\le t_0}(i,t) \mid H_{\le t_0}}
			&\le \int_{\mathcal S_{\le t_0}(i,t)}\prcond[g]{a_t^i \neq a_g \mid  s_{\le t_0}} d\prcond[g]{ s_{\le t_0} \mid H_{\le t_0}} \\
			&\le \prcond[g]{a_t^i \neq a_g \mid H_{\le t_0}}\\
			&\le e^{-(r + 5\epsilon)t}
		\end{align*}
		and thus, $\prcond[g]{\mathcal S_{\le t_0}(i,t) \mid H_{\le t_0}} \le e^{-\epsilon t}$.
		Then, let 
		\begin{align*}
			\hat{\mathcal S}_{\le t_0} = \mathcal S(H_{\le t_0}) \setminus \bigcup_{i\in N, t > t_0} \mathcal S_{\le t_0}(i,t)
		\end{align*}
		for which we have
		\begin{align*}
			\prcond[g]{\hat{\mathcal S}_{\le t_0} \mid H_{\le t_0}} \ge 1 - \sum_{i\in N, t > t_0} \prcond[g]{\mathcal S_{\le t_0}(i,t) \mid H_{\le t_0}} \ge 1 - n\sum_{t > t_0} e^{-\epsilon t} > 0
		\end{align*}
		It thus follows from $\prcond[g]{H_{\le t_0}} > 0$ that $\prcond[g]{\hat{\mathcal S}_{\le t_0}} > 0$, and that for each $i\in N$, each $t > t_0$, and almost every $s_{\le t_0}\in\hat{\mathcal S}_{\le t_0}$,
		\begin{align*}
			\prcond[g]{a_t^i \neq a_g \mid  s_{\le t_0}} \le e^{-(r + 4\epsilon)t}
		\end{align*}
		by construction of $\hat{\mathcal S}_{\le t_0}$.
		Since $\mathbb R^n$ can be covered by countably many cubes with side length $\epsilon$, there are $\mathcal S_{\le t_0}\subset\hat{\mathcal S}_{\le t_0}$ and $m^1,\dots,m^n\in\mathbb R$ such that $\prcond[g]{\mathcal S_{\le t_0}} > 0$ and (iv) holds.
		Moreover, $\mathcal S_{\le t_0}$ satisfies (i) and (iii) since those are preserved under passing to subsets.
		This finishes the construction.
	\end{step}
	\begin{step}
		\label{step:correlated2}
		Second, let
		\begin{align*}
			\mathcal W =& \{ s\in S^{N\times T}\colon \forall i\in N,\forall t\ge t_0,\\
			&(m^i-\epsilon)t_0 - (\evcond[f]{\ell_{f,g}^i}+\epsilon)(t-t_0) \le L_{g,f}^i(s_{\le t}^i) \le (m^i + \epsilon)t_0 - (\evcond[f]{\ell_{f,g}^i} - \epsilon)(t-t_0)\} 
		\end{align*}
	 	be those trajectories of profiles such that the log-likelihood ratio for $g$ over $f$ of each agent $i$ based only on her private signals remains in a wedge with slopes $-\evcond[f]{\ell_{f,g}^i} + \epsilon$ and $-\evcond[f]{\ell_{f,g}^i} - \epsilon$ for every period from $t_0$ onward.
	 	By Property (iv) in \Cref{step:correlated1}, $\mathcal S_{\le t_0} \subset \mathcal W_{\le t_0}$.
	 	Moreover, by \Cref{lem:wedge-probability} and~\ref{item:ht03} in \Cref{step:correlated1}, for almost every $s_{\le t_0}\in \mathcal S_{\le t_0}$, $\prcond[f]{\mathcal W \mid  s_{\le t_0}} \ge 1 - \frac{n}{2n} = \frac12$.
		 	Note that for each $i\in N$, each $t\ge t_0$, and each $\mathcal V_{\le t}^i\subset \mathcal W_{\le t}^i$,
		 	\begin{align}
		 	\begin{aligned}
				e^{-(m^i+\epsilon)t_0 + (\evcond[f]{\ell_{f,g}^i} - \epsilon)(t-t_0)}\prcond[g]{\mathcal V_{\le t}^i} \le \prcond[f]{\mathcal V_{\le t}^i}
				\le e^{ -(m^i-\epsilon)t_0 + (\evcond[f]{\ell_{f,g}^i}+\epsilon)(t-t_0)}\prcond[g]{\mathcal V_{\le t}^i}
		 	\end{aligned}
		 	\label{eq:correlated-likelihood-increase}
		 	\end{align}
	\end{step}
	\begin{step}\label{step:correlated3}
		Third, we show that for each $i \in N$, each $t\ge t_0$, and almost every $s_{\le t_0}\in \mathcal S_{\le t_0}$,
		\begin{align}
			\prcond[g]{a_t^i \neq a_g \mid  s_{\le t_0}, H_{<t} } \le 2 e^{-(r + 4\epsilon)t}
			\label{eq:deviations-unlikely-Ht}
		\end{align}
		
		Fix $i \in N$ and $t > t_0$.
		By \Cref{step:correlated2}, Item~(iii) in \Cref{step:correlated1}, and the choice of $t_0$, for almost every $s_{\le t_0}\in\mathcal S_{\le t_0}$, 
		\begin{align*}
			\prcond[g]{H_{<t} \mid  s_{\le t_0}}
			&\geq 1 - \prcond[g]{\neg H_{<t} \mid  s_{\le t_0}}  \\
			&\geq 1 - \sum_{j\in N, t' > t_0} \prcond[g]{a_{t'}^j \neq a_g \mid  s_{\le t_0}} \\
			&\ge 1 - n\sum_{t' > t_0} e^{-(r + 4\epsilon)t'}\\
			&\ge \frac12
		\end{align*}
		and, moreover,
		\begin{align*}
			\prcond[g]{a_t^i \neq a_g \mid  s_{\le t_0}, H_{<t}} \prcond[g]{H_{<t} \mid  s_{\le t_0}} \le \prcond[g]{a_t^i \neq a_g \mid  s_{\le t_0}}
		\end{align*}
		Then~\eqref{eq:deviations-unlikely-Ht} follows using again Item~(iii) in \Cref{step:correlated1}.
	\end{step}

		We continue by setting up for the rest of the proof.
		For each $i \in N$ and each $t > t_0$, we say that agent $i$ defects in period $t$ if play follows $H_{<t}$ up to period $t-1$ and $i$ does not choose $a_g$ in period $t$, and we define
		\begin{align*}
			D_t^i = \left\{s^i \in S^T\colon \sigma_t^i(s_{\le t}^i; H_{<t}) \neq a_g\right\}
		\end{align*}
		as those infinite trajectories $s^i$ for $i$ such that $i$ defects at $t$ for $s_{\le t}^i$.
		Note that, by~\eqref{eq:deviations-unlikely-Ht}, for almost every $s_{\le t_0} \in\mathcal S_{\le t_0}$,
		\begin{align}
			\prcond[g]{D_t^i \mid  s_{\le t_0}, H_{<t}} = \prcond[g]{a_t^i \neq a_g \mid  s_{\le t_0}, H_{<t} }  \le 2 e^{-(r + 4\epsilon)t}\label{eq:defection-probability-bound-conditional}
		\end{align}
		
The rest of the proof proceeds as follows. 
To simplify the outline, we pretend that $t_0 = 0$ (and thus $\mathcal S_{\le t_0} = \{\emptyset\}$ contains only the empty signal profile trajectory) and that $\mathcal W = S^{N\times T}$. 
The goal is to show that $\prcond[f]{H} > 0$, which would imply that each agent's learning rate is $0$ and thus contradict the assumption.

Here is an initial attempt that fails but is a useful starting point. 
It would suffice to show that for each $i$ and each $t > t_0$, $\prcond[f]{D_t^i} \le e^{-\epsilon t}$, which would follow from $\prcond[g]{D_t^i} \le e^{-(r + 3\epsilon)t}$ by the definition of $r$.
This is close to~\eqref{eq:defection-probability-bound-conditional}, except that~\eqref{eq:defection-probability-bound-conditional} involves conditioning on $H_{<t}$. 
Unfortunately, $\prcond[g]{D_t^i} \le e^{-(r + 3\epsilon)t}$ does not follow from~\eqref{eq:defection-probability-bound-conditional} since $\prcond[g]{D_t^i \mid H_{<t}}$ can be arbitrarily small compared to $\prcond[g]{D_t^i}$.
This dead end inspires a more nuanced approach. 
 
We split up agent $i$'s infinite signal trajectories.
For each $t$, $E_t^i$ consists of those trajectories $s^i$ for which there is some $j$ such that $\prcond[g]{D_t^j \mid s^i, H_{<t}}$ is not too small and $t$ is the first period for which there is such $j$, and $F^i$ consists of those trajectories $s^i$ for which $\prcond[g]{D_t^j \mid s^i, H_{<t}}$ is small for all $j$ and $t$.
For a trajectory $s^i \in E_t^i \cup F^i$, $\prcond[g]{H_{<t}\mid s^i}$ is large since no agent is likely to defect in any period before $t$ conditional on $s^i$ (see~\eqref{eq:correlated-probHt} below).
Thus, one can bound the probability of $E_t^i$ in terms of its probability conditional on $H_{<t}$, which in turn can be bounded in terms of the probability that some agent defects in period $t$ conditional on a trajectory in $E_t^i$ and $H_{<t}$ by the definition of $E_t^i$ (see~\eqref{eq:correlated2} below).
Similarly, for each $t$, one can bound the probability of those trajectories in $F^i$ for which agent $i$ defects in period $t$ in terms of the same probability conditional on $H_{<t}$ (see~\eqref{eq:correlated5} below).
Combining both estimates and using that defections are very unlikely in state $g$ by assumption, gives that the intersection of $D_t^i$ with each of the sets $E_{t'}^i$, $t' \in T$, and $F^i$ has very low probability in state $g$ and thus still moderately low probability in state $f$.
But then, it is unlikely that any agent defects in state $f$, and all agents indefinitely play $a_g$ with positive probability. 
	
	\begin{step}\label{step:correlated5}
		Fourth, since signals are conditionally independent across periods, for each $i,j \in N$ and $t > t_0$,
		\begin{align}
			\prcond[g]{a_{t}^j \neq a_g \mid \mathfrak s_{\le t_0}, \mathfrak s^i, H_{<t}} = \prcond[g]{a_{t}^j \neq a_g \mid \mathfrak s_{\le t_0}, \mathfrak s_{\le t}^i, H_{<t}}
			\label{eq:deconditioning}
		\end{align}	
		almost surely.	
		For each $i\in N$ and each $t > t_0$, define inductively 
		\begin{align*}
			E_{t}^{i}(s_{\le t_0}) = & \left\{s^i \in \{s_{\le t_0}^i\}\times S^{\{t_0+1,t_0+2,\dots\}}\colon \exists j \in N, \prcond[g]{a_{t}^j \neq a_g \mid  s_{\le t_0}, s^i, H_{<t}} \ge e^{-\epsilon t}\right\}\\	 
			&\setminus \bigcup_{t > t' > t_0} E_{t'}^i(s_{\le t_0})
		\end{align*}
		and define
		\begin{align*}
			F^{i}(s_{\le t_0}) = \left\{s^i \in \{s_{\le t_0}^i\}\times S^{\{t_0+1,t_0+2,\dots\}} \colon \forall j\in N, \forall t > t_0, \prcond[g]{a_{t}^j \neq a_g \mid  s_{\le t_0}, s^i, H_{<t}} < e^{-\epsilon t}\right\}	
		\end{align*}
		and note that $\{E_t^i(s_{\le t_0}^i) \colon t > t_0\}\cup\{F^i(s_{\le t_0})\}$ is a partition of $\{s_{\le t_0}^i\} \times S^{\{t_0+1,t_0+2,\dots\}}$.
		Intuitively, $E_{t}^{i}(s_{\le t_0})$ is the set of infinite trajectories of $i$'s signals with prefix $s_{\le t_0}^i$ such that conditional on $s_{\le t_0}$, $s^i$, and $H_{<t}$, some agent defects with appreciable probability in period $t$ and no agent does so in any period between $t_0$ and $t$, and $F^{i}(s_{\le t_0})$ contains those infinite trajectories for which no agent defects with appreciable probability in any period after $t_0$.
		It may help to think of the sets $E_t^i(s_{\le t_0})$ as ``empty'' in the sense that some other agent exhausts an appreciable fraction of them through defections, and to think of the sets $F^i(s_{\le t_0})$ as ``full'' in the sense that they are not significantly reduced by a defection of any agent.
		
		We establish bounds on the probabilities of these sets. 
		First, for each $i\in N$, each $t > t_0$, almost every $s_{\le t_0}\in\mathcal S_{\le t_0}$, and almost every $s^i \in E_t^i(s_{\le t_0}) \cup F^i(s_{\le t_0})$,
		\begin{align}
			\begin{aligned}
				\prcond[g]{H_{<t} \mid  s_{\le t_0}, s^i} 
				&= \prod_{t > t' > t_0} \prcond[g]{H_{<t' + 1} \mid  s_{\le t_0}, s^i, H_{<t'}} \\
				&= \prod_{t > t' > t_0} \left(1 - \prcond[g]{\exists j \in N, a_{t'}^j \neq a_g \mid  s_{\le t_0}, s^i, H_{<t'}}\right) \\
				&\ge 1 - \sum_{j\in N,t > t' > t_0} \prcond[g]{a_{t'}^j \neq a_g \mid  s_{\le t_0}, s^i, H_{<t'}} \\
				&\ge 1 - n\sum_{t > t' > t_0} e^{-\epsilon t'} \\
				&\ge \frac12
			\end{aligned}
			\label{eq:correlated-probHt}
		\end{align}
		where the fourth step uses that $s^i\not\in E_{t'}^i(s_{\le t_0})$ for $t' < t$.
		Thus, for each $i\in N$, each $t > t_0$, and almost every $s_{\le t_0}$,
		\begin{align}
			\begin{aligned}
			\prcond[g]{E_t^i(s_{\le t_0}) \mid  s_{\le t_0}}
			&= \int_{E_t^i(s_{\le t_0})}  d\prcond[g]{s^i \mid s_{\le t_0}}\\
			&\le 2\int_{E_t^i(s_{\le t_0})} \prcond[g]{H_{<t} \mid s_{\le t_0},s^i} d\prcond[g]{s^i \mid s_{\le t_0}}\\
			&= 2\prcond[g]{E_t^i(s_{\le t_0})\cap H_{<t} \mid s_{\le t_0}}\\
			&\le 2\prcond[g]{E_t^i(s_{\le t_0}) \mid  s_{\le t_0}, H_{<t}} \\
			&= 2 \int_{E_t^i(s_{\le t_0})} d\prcond[g]{s^i \mid  s_{\le t_0}, H_{<t}} \\
			&\le 2e^{\epsilon t} \int_{E_t^i(s_{\le t_0})} \sum_{j \in N}\prcond[g]{a_t^j \neq a_g \mid  s_{\le t_0}, s^i, H_{<t}} d\prcond[g]{s^i \mid  s_{\le t_0}, H_{<t}} \\
			&\le 2e^{\epsilon t} \sum_{j \in N} \prcond[g]{a_{t}^j \neq a_g\mid  s_{\le t_0}, H_{<t}} \\
				&\le 4n e^{\epsilon t} e^{-(r + 4\epsilon)t}\\ 
				&= 4n e^{-(r + 3\epsilon)t}
			\end{aligned}
			\label{eq:correlated2}
		\end{align}
		where the second step uses~\eqref{eq:correlated-probHt}, the sixth step uses the definition of $E_t^i(s_{\le t_0})$, and the second to last step uses~\eqref{eq:deviations-unlikely-Ht}.
		In words, the left-hand side is the probability that, conditional on $g$ and $s_{\le t_0}$, agent $i$ observes an infinite trajectory conditional on which some agent defects with appreciable probability in period $t$ and no agent defects with appreciable probability in any period before $t$. 	
		Second, for each $i\in N$, each $t > t_0$, and almost every $s_{\le t_0}\in\mathcal S_{\le t_0}$,		
		\begin{align}
			\begin{aligned}
			\prcond[g]{\left(D_t^i \cap F^i(s_{\le t_0})\right)_{\le t} \mid  s_{\le t_0}}
			&\le 2\prcond[g]{\left(D_t^i \cap F^i(s_{\le t_0})\right)_{\le t} \mid  s_{\le t_0}, H_{<t}} \\
			&\le 2\prcond[g]{a_t^i \neq a_g \mid  s_{\le t_0}, H_{<t}}\\
			&\le 4 e^{-(r + 4\epsilon)t}
			\end{aligned}
			\label{eq:correlated5}
		\end{align}
		where the first step follows from repeating the first four steps in~\eqref{eq:correlated2} and the third step follows from~\eqref{eq:deviations-unlikely-Ht}.
		Here, the left-hand side is the probability that conditional on $g$ and $s_{\le t_0}$, agent $i$ observes a trajectory up to $t$ for which she defects in period $t$ and, conditional on which no agent defects with appreciable probability in any period from $t_0+1$ to $t$. 	
	\end{step}
	\begin{step}\label{step:correlated6}
		We now show that, conditional on state $f$ and almost every signal profile trajectory $s_{\le t_0}\in\mathcal S_{\le t_0}$, with positive probability, no agent defects in any period.
		
		First, for each $i \in N$, each $t > t_0$, and almost every $s_{\le t_0} \in\mathcal S_{\le t_0}$,
		\begin{align*}
			\begin{aligned}
				\prcond[f]{ E_t^i(s_{\le t_0}) \mid  s_{\le t_0}, \mathcal W} &\le 2 \prcond[f]{ E_t^i(s_{\le t_0}) \cap \mathcal W \mid  s_{\le t_0}}\\
				&\le 2 \prcond[f]{ (E_t^i(s_{\le t_0}) \cap \mathcal W^i)_{\le t} \mid  s_{\le t_0}}\\
				&\le 2 e^{(m^i+\epsilon)t_0} e^{-(m^i-\epsilon)t_0 + (\evcond[f]{\ell_{f,g}^i}+\epsilon)(t-t_0)} \prcond[g]{ (E_t^i(s_{\le t_0}) \cap \mathcal W^i)_{\le t} \mid  s_{\le t_0}} \\
				&\le 2 e^{2\epsilon t_0 + (\evcond[f]{\ell_{f,g}^i}+\epsilon)(t-t_0)} \prcond[g]{ (E_t^i(s_{\le t_0}) \cap \mathcal W^i)_{\le t} \mid  s_{\le t_0}} \\
				&\le 2 e^{(r+2\epsilon)t} \prcond[g]{ (E_t^i(s_{\le t_0}) \cap \mathcal W^i)_{\le t} \mid  s_{\le t_0}} \\
				&\le 8 ne^{-\epsilon t}	
			\end{aligned}
		\end{align*}
		where the first step uses that $\prcond[f]{\mathcal W \mid s_{\le t_0}} \ge \frac{1}{2}$ by \Cref{step:correlated2}, the third step uses~\eqref{eq:correlated-likelihood-increase}, and the last step uses~\eqref{eq:correlated2} and the fact that, by~\eqref{eq:deconditioning}, whether an infinite trajectory of $i$'s signals is in $E_t^i(s_{\le t_0})$ only depends on its prefix up to and including $t$.
		Thus,
		\begin{align}
			\begin{aligned}
				\prcond[f]{\cup_{i\in N, t > t_0} E_t^i(s_{\le t_0}) \mid  s_{\le t_0}, \mathcal W} 
				\le 8n^2 \sum_{t > t_0} e^{-\epsilon t}
				\le \frac14
			\end{aligned}
			\label{eq:correlated7}
		\end{align}
		Second, similar to above, using~\eqref{eq:correlated5} instead of~\eqref{eq:correlated2}, for each $i\in N$, each $t > t_0$, and almost every $s_{\le t_0} \in \mathcal S_{\le t_0}$,
		\begin{align*}
			\prcond[f]{D_t^i \cap F^{i}(s_{\le t_0}) \mid  s_{\le t_0}, \mathcal W}
			&\le 2\prcond[f]{D_t^i \cap F^{i}(s_{\le t_0}) \cap \mathcal W \mid  s_{\le t_0}}\\ 
			&\le 2\prcond[f]{\left(D_t^i \cap F^{i}(s_{\le t_0})_{\le t} \cap \mathcal W^i\right)_{\le t} \mid  s_{\le t_0}}\\
			&\le 2e^{(r+2\epsilon)t} \prcond[g]{\left(D_t^i \cap F^{i}(s_{\le t_0}) \cap \mathcal W^i\right)_{\le t} \mid  s_{\le t_0}} \\
			&\le 8e^{-2\epsilon t}
		\end{align*}
		and thus,
		\begin{align}
			\begin{aligned}
				\prcond[f]{\cup_{i\in N, t > t_0} D_t^i \cap F^i(s_{\le t_0}) \mid  s_{\le t_0}, \mathcal W}	
			\le 4n\sum_{t > t_0} e^{-2\epsilon t}
			\le \frac14
			\end{aligned}
			\label{eq:correlated9}
		\end{align}
		Then, for almost every $s_{\le t_0}\in\mathcal S_{\le t_0}$,
		\begin{align*}
			1 - \prcond[f]{H \mid  s_{\le t_0}, \mathcal W} 
			&\le \prcond[f]{\cup_{i\in N, t > t_0} D_t^i \mid  s_{\le t_0}, \mathcal W}\\
			&= \prcond[f]{\cup_{i\in N, t,t' > t_0} D_t^i  \cap E_{t'}^i(s_{\le t_0}) \mid  s_{\le t_0}, \mathcal W}  \\
			&+ \prcond[f]{\cup_{i\in N, t > t_0} D_t^i \cap F^i(s_{\le t_0}) \mid  s_{\le t_0}, \mathcal W}  \\
			&\le \prcond[f]{\cup_{i\in N, t > t_0} E_{t}^i(s_{\le t_0}) \mid  s_{\le t_0}, \mathcal W}  \\
			&+ \prcond[f]{\cup_{i\in N, t > t_0} D_t^i \cap F^i(s_{\le t_0}) \mid  s_{\le t_0}, \mathcal W}  \\
			&\le\frac1{2}
		\end{align*}
		If the infinite history is not $H$, some agent defects in some period, hence the first step. 
		The second step uses that $\{E_{t}^i(s_{\le t_0})\colon t > t_0\} \cup \{F^i(s_{\le t_0})\}$ is a partition of $\{s_{\le t_0}^i\}\times S^{\{t_0+1,t_0+2,\dots\}}$.
		The third step is a basic manipulation.
		The fourth step follows from~\eqref{eq:correlated7} and~\eqref{eq:correlated9}.
		Hence, \Cref{step:correlated2}, $\prcond[g]{\mathcal S_{\le t_0}} > 0$, and \Cref{lem:absolute-continuity} imply that
		\begin{align*}
			\prcond[f]{H} &\ge \int_{S_{\le t_0}} \prcond[f]{H \mid  s_{\le t_0}, \mathcal W} \prcond[f]{\mathcal W\mid s_{\le t_0}} d\prcond[f]{s_{\le t_0}}  \ge \frac12 \int_{\mathcal S_{\le t_0}} \prcond[f]{\mathcal W\mid s_{\le t_0}} d\prcond[f]{s_{\le t_0}} > 0
		\end{align*}
		In particular, in state $f$, the probability that each agent chooses $a_g$ in each period after $t_0$ is strictly positive, and thus each agent's learning rate is $0$.
		This contradicts~\eqref{eq:too-fast-learning} and finishes the proof.
	\end{step}
\end{proof}

We prove that sufficiently many agents in a strongly connected network can learn faster than a single agent in autarky.

\learninglowerbound*
\begin{proof}
	We assume for now that $N^i = N$ for all $i\in N$ and treat the general case later.
	For $f,g\in\Omega$, define $\ell_{f,g} = \ell_{f,g}^1$ and $m_{f,g} = \evcond[f]{\ell_{f,g}}$, and fix $\epsilon > 0$.
	First, by~\eqref{eq:lambda-star-state-swap}, $\lambda_{f,g}^*(-m_{g,f}) = m_{g,f}$.
	Second, since $I_{f,g}^*$ contains $m_{g,f}$ and $\lambda_{f,g}^*$ is continuous on $I_{f,g}^*$ by \Cref{lem:lambda-star-properties} and the remarks thereafter, there is $\delta > 0$ such that 
	\begin{align*}
		\min_{f\neq g} \lambda_{f,g}^*(-m_{g,f} + \delta) > \min_{f\neq g} \lambda_{f,g}^*(-m_{g,f}) - \epsilon
	\end{align*}
	We may further assume that $\delta < \min_{f\neq g} m_{f,g}$.
	Hence, when defining
	\begin{align*}
		\rmajt = \min_{f\neq g} \lambda_{f,g}^*(-m_{g,f} + \delta)
	\end{align*}
	we have $\rmajt > \min_{f\neq g} m_{f,g} = \rbdd - \epsilon$.
	It thus suffices to exhibit strategies for which the learning rate is at least $\rmajt$.
	\setcounter{step}{0}
	\begin{step}[Constructing the strategies]\label{step:fast-learning1}
	We start by inductively defining the strategies.
	Let $i\in N$.
	The strategy $\sigma_1^i$ is arbitrary.
	Now let $t > 1$. 
	Given $a^1_{t-1},\dots,a^n_{t-1}$, let $a_{t-1}^{\mathrm{pop}} \in \arg\max_{a\in A} |\{j\in N\colon a_{t-1}^j = a\}|$ be an action that is most popular among the actions taken in period $t-1$.
	For a history $H_{< t}^i \in A^{N^i\times (t-1)}$ for agent $i$, let
	\begin{align*}
		a_t^i = \sigma_t^i(\mathfrak s_1^i,\dots,\mathfrak s_t^i;H_{< t}^i) =
		\begin{cases}
			a_f \quad&\text{if } L_{f,g,t}^{i} \ge (m_{f,g} - \delta) t \;\forall g\neq f\text{, and}\\
			a_{t-1}^{\mathrm{pop}} \quad&\text{otherwise}
		\end{cases}
	\end{align*}
	Note that $\sigma^i$ is well-defined since $L_{f,g,t}^i = -L_{g,f,t}^i$ and $m_{f,g} - \delta > 0$.
	Hence, agents follow their private signals if those are sufficiently decisive and the previous period's most popular action otherwise. 
	\end{step}

	\begin{step}[Bounding the probabilities of mistakes]\label{step:fast-learning2}
		 Now we derive the claimed bound on the probability of mistakes.
		 
		 \setcounter{case}{0}
		 \begin{case}
		 	First, we consider the probability that agent $i$ makes a mistake if she acts based on her private signals.
		 By \Cref{thm:large-deviations} and the remarks after \Cref{lem:lambda-star-properties}, we have for any two distinct $f,g\in\Omega$,
		 \begin{align*}
		 	\prcond[g]{L_{g,f,t}^{i} \le -(m_{f,g} - \delta) t} = e^{-\lambda_{g,f}^*(-m_{f,g} + \delta) t + o(t)} 
		 \end{align*}
		Hence, for each $f\in\Omega$,
		 \begin{align*}
		 	\pr[a_t^i \neq a_\omega \mid L_{f,g,t}^{i} \geq (m_{f,g} - \delta) t\;\forall g\neq f] &\le \sum_{g\neq f} \prcond[g]{L_{f,g,t}^{i} \ge (m_{f,g} - \delta) t} \pr[\omega = g]\\	
			&= \sum_{g\neq f} \prcond[g]{L_{g,f,t}^{i} \le -(m_{f,g} - \delta) t} \pr[\omega = g]\\
		 	&= \sum_{g\neq f} e^{-\lambda_{g,f}^*(-m_{f,g} + \delta)t + o(t)} \pr[\omega = g]\\
		 	&= e^{-\rmajt t + o(t)}
		 \end{align*}	
		which gives the desired bound.
		\end{case}
		\begin{case}
			Second, we consider the probability that agent $i$ makes a mistake if she follows the previous period's most popular action.
			We use a standard tail estimate for binomial distributions: if $X$ is binomially distributed with sample size $n$ and success probability $q$, then
			\begin{align*}
				\pr[X \le k] \le e^{-n \infdiv{\frac kn}{q}}
			\end{align*}
			where 
			\begin{align*}
			\infdiv{a}{b} = a\log \frac ab + (1-a)\log \frac{1-a}{1-b}
		\end{align*}
		is the Kullback-Leibler divergence of two Bernoulli distributions with success probabilities $a,b\in [0,1]$.
		Thus, for all $f\in\Omega$, 
		\begin{align*}
			\prcond[f]{a_{t-1}^{\mathrm{pop}} \neq a_f} &\le \prcond[f]{|\{i\in N\colon L_{f,g,t-1}^{i} \geq (m_{f,g} - \delta)  (t-1)\; \forall g\neq f\}| \le \frac n2}\\
			&\le e^{-n \infdiv{\frac12}{1-q_{f,t-1}}}
		\end{align*}
		where
		\begin{align*}
			q_{f,t} = \sum_{g\neq f} e^{-\lambda_{f,g}^*(m_{f,g} - \delta)t + o(t)} \pr[\omega = g] = e^{- \min_{f\neq g} \lambda_{f,g}^*(m_{f,g} - \delta) t + o(t)}
		\end{align*}
		is an upper bound for the probability that agent $i$ does not choose $a_f$ in state $f$ (obtained from \Cref{thm:large-deviations}).
		Hence, for all $f\in\Omega$,
		\begin{align*}
			\infdiv{\frac12}{1-q_{f,t}} &= \frac12\left(\log\frac1{2(1-q_{f,t})} + \log \frac1{2q_{f,t}}\right)\\
			&= \log \frac12 + \frac12\left(\log\frac1{1-q_{f,t}} + \log \frac1{q_{f,t}}\right)\\
			&= \log\frac12 + \frac12\left(q_{f,t} + o(q_{f,t}) + \min_{f\neq g} \lambda_{f,g}^*(m_{f,g} - \delta)t + o(t)\right)\\
			&= \frac12 \min_{f\neq g} \lambda_{f,g}^*(m_{f,g} - \delta) t + o(t)
		\end{align*}
		where the third step uses that $\frac{1}{1-x} = 1 + x + o(x)$ and $\log(1+x + o(x)) = x + o(x)$, where $o(x)$ is a quantity that converges to $0$ faster than $x$ as $x\rightarrow 0$.
		Thus,
		\begin{align*}
			\prcond[f]{a_{t-1}^{\mathrm{pop}} \neq a_f} &\le e^{-\frac n2 \min_{f\neq g} \lambda_{f,g}^*(m_{f,g} - \delta) t + o(t)}
		\end{align*}
		Since $\lambda_{f,g}^*(m_{f,g}) = 0$ and $\lambda_{f,g}^*$ is strictly decreasing on $(\inf \ell_{f,g},m_{f,g}]$ by \Cref{lem:lambda-star-properties}\ref{item:lambda-star-properties-monotonicity}, $\lambda_{f,g}^*(m_{f,g} - \delta) > 0$ for all $f\neq g$, and so $\min_{f\neq g} \lambda_{f,g}^*(m_{f,g} - \delta) > 0$.	
		So if $n$ is large enough depending on $\delta$, the probability that $i$ makes a mistake in period~$t$ by following the most popular action of period $t-1$ decreases at a rate of at least $\rmajt$.
		More precisely, we need that
		\begin{align*}
			n \ge 2 \frac{\min_{f\neq g} \lambda_{f,g}^*(-m_{g,f} + \delta)}{\min_{f\neq g} \lambda_{f,g}^*(m_{f,g} - \delta)}
		\end{align*}
		\end{case}
		We conclude that if $n$ is large enough depending on $\delta$, then each agent learns at a rate of at least $\rmajt = \min_{f\neq g} \lambda_{f,g}^*(-m_{g,f} + \delta)$.

	\end{step}
	\begin{step}[Extending to arbitrary networks]\label{step:fast-learning3}
		It remains to extend the results to arbitrary strongly connected networks.
		We sketch the argument but omit the details. 
		The main idea is to add periods that are used to propagate the agents' action choices in previous periods through the network.
		
		For two agents $i,j$, denote by $d(i,j)$ the length of a shortest path from $i$ to $j$.
		For example, if $j \in N^i$ and $i\neq j$, then $d(i,j) = 1$.\footnote{More precisely, $d(i,j)$ is defined inductively by letting $d(i,i) = 0$ and for $k\ge 1$ and $j\neq i$, $d(i,j) = k$ if $\min\{d(i,i')\colon i'\in N, d(i,i') \le k-1, j \in N^{i'}\} = k-1$.}
		Since the network is strongly connected, $d(i,j)$ is at most $n-1$ for all $i,j$.
		We partition the set of periods $T$ into intervals of $M = 1 + n(n-2)$ periods.
		We call the periods $\{1,1+M,1+2M,\dots\}$ voting periods, and the remaining periods propagation periods.
		In each voting period $t\in\{1,1+M,1+2M,\dots\}$, each agent $i$ chooses an action similar to the construction in \Cref{step:fast-learning1}: if $t = 1$, $i$ chooses an arbitrary action; in any later voting period $t$, if $i$'s private signals up to period~$t$ are sufficiently decisive, she chooses an action optimally based on those, and she follows the most popular action in period $t-M$ otherwise.
		(The strategies during the propagation periods will ensure that $i$ knows the most popular action in period $t-M$ even though she does not observe all agents' actions directly.)
		For $j\in N$, in period $t \in\{1+j, 1+j + M, 1+j+2M,\dots\}$, each agent $i$ with $d(i,j) = 1$ imitates $j$'s action in period $t-j$ (i.e., $a_t^i = a_{t-j}^j$), and all other agents repeat their own action in period $t-j$ (i.e., $a_t^i = a_{t-j}^i$).
		For $j\in N$ and $k\in[n-3]$, in period $t\in\{1 + j + kn, 1+j+kn + M,1+j+kn + 2M, \dots\}$, each agent $i$ with $d(i,j) = k+1$ imitates $j$'s action in period $t - j - kn$ (which they observed from some agent with distance $k$ to $j$ in period $t - n$), and all other agents repeat their own action in period $t - j - kn$.
		Hence, any propagation period~$t$ with $t\in\{1 + j + kn, 1+j+kn + M,1+j+kn + 2M, \dots\}$ is used to inform agents with distance $k+2$ to $j$ about $j$'s action in the latest voting period by letting an agent with distance $k+1$ to $j$ imitate that action.\footnote{Extending \Cref{thm:coordination} to random networks as discussed in \Cref{rem:random-networks} requires a minor modification of the strategies: in each period $t\in\{1+j+kn,1+j+kn + M,1+j+kn+2M,\dots\}$, each agent who has observed $j$'s action in the latest voting period either directly or indirectly through other agents, imitates $j$'s action in that voting period, and all other agents repeat their own action in that voting period. Then, the set of agents who have (possibly indirectly) observed $j$'s action in the latest voting period grows by at least one agent in period~$t$, unless it already contained all agents before period~$t$. This follows from the assumption that any realization of the network is strongly connected, so that in period~$t$, at least one new agent observes the action of some agent already contained in that set before period~$t$.\label{fn:random-networks}}
		
		Since the agents know the network, they know whether the action of an agent $i$ in any propagation period imitates the action of another agent or agent $i$'s own action in the latest voting period.
		Thus, in each voting period, each agent knows all other agents' actions in the preceding voting period.
		By the same arguments as in \Cref{step:fast-learning2}, for each $i\in N$ and each voting period $t\in\{1,1+M,1+2M,\dots\}$,
		\begin{align*}
			\pr[a_t^i\neq a_\omega] \le e^{-\rmajt t + o(t)}
		\end{align*}
		provided that $n$ is large enough.
		In any propagation period~$t$, each agent $i$ imitates the action of an agent from the latest voting period, and so
		\begin{align*}
			\pr[a_t^i\neq a_\omega] \le e^{-\rmajt (t-M) + o(t)} = e^{-\rmajt t + o(t)}
		\end{align*}
		since that voting period does not lie more than $M$ periods in the past.
		So the preceding inequality holds for all periods after replacing the $o(t)$-term.
	\end{step}
\end{proof}

\newpage

\newpage

\section{Notation}\label{sec:referencetable}
	 
The table below summarizes the notation used in the paper.	 
	 
\vspace{2ex}

\centering
\begin{tabularx}{\textwidth}{llXl}
	\toprule
	\multicolumn{2}{l}{Symbol [elements]} & Name & Mathematical object\\
	\midrule
	$N$ & $i,j$ & agents & $\{1,\dots,n\}$ \\
	$T$ & $t$ & periods & $\{1,2,\dots\}$ \\
	$A$ & $a,a_t^i$ & actions & arbitrary set\\
	$\Omega$ & $f,g$ & states & finite set\\
	$\omega$ & & random unknown state & $\Omega$-valued random variable \\
	$\pi_0$ & & common prior belief & distribution on $\Omega$\\
	$u$ & & utility function & function $A\times\Omega\rightarrow\mathbb R$\\
	$a_f$ & & utility-maximizing action in state $f$ & element of $A$\\
	$S$ & $s,s_t^i$ & signals & standard Borel space\\
	$\mathfrak s_t^i$ & & $i$'s private signal in period $t$ & $S$-valued random variable \\
	$\mu_f^i$ & & distribution of $\mathfrak s_t^i$ in state $f$ & distribution on $S$ \\
	$\ell_{f,g}^i(s)$ & & $i$'s log-likelihood ratio given signal $s$ & function $S\rightarrow \mathbb R$ \\
	$\lambda_{f,g}^i(z)$ & & CGF of $\ell_{f,g}^i(\mathfrak s_1^i)$ & function $\mathbb R\rightarrow \mathbb R$ \\
	$(\lambda_{f,g}^i)^*(\eta)$ & & Fenchel-Legendre transform of $\lambda_{f,g}^i(z)$ & function $\mathbb R\rightarrow \mathbb R$ \\
	$L_{f,g,t}^i$ & & $i$'s log-likelihood ratio given $\mathfrak s_{\le t}^i$ & $\mathbb R$-valued random variable \\
	$N^i$ & $j$ & $i$'s neighborhood & subset of $N$ \\
	$\mathcal I_{\le t}^i$ &  & $i$'s information sets in period $t$ & short for $S^t\times A^{N^i\times(t-1)}$ \\ 
	$\sigma_t^i$ & & $i$'s strategy in period $t$ & function $\mathcal I_{\le t}^i\rightarrow A$ \\
	$\sigma$ & & strategy profile & short for $(\sigma_t^i)_{i\in N,t\in T}$ \\
	$\mathcal S^i(H_{\le t})$ & $s_{\le t}^i$ & $i$'s signals compatible with $H_{\le t}$ & subset of $S^t$ \\
	$\mathcal S(H_{\le t})$ & $s_{\le t}$ & signal profiles compatible with $H_{\le t}$ & short for $\Pi_{i\in N} \mathcal S^i(H_{\le t})$ \\
	$\raut^i$ & & $i$'s learning rate in autarky & element of $\mathbb R_+$ \\
	$\rbdd$ & & tight learning rate upper bound & element of $\mathbb R_+$ \\
	$\rbddt$ & & relaxed learning rate upper bound & element of $\mathbb R_+$ \\
	\bottomrule
\end{tabularx}

\end{document}